\documentclass[a4paper,11pt]{article} 
\usepackage[english]{babel}
\usepackage[a4paper]{geometry}
\usepackage{amsmath}
\usepackage{amsthm}
\usepackage{amssymb}
\usepackage{color}
\usepackage{hyperref} 
\usepackage{enumerate}


\def\bR{\mathbb{R}}

\def\bT{\mathbb{T}}

\def\bZ{\mathbb{Z}}
\def\cN{\mathcal{N}}
\def\cE{\mathcal{E}}
\def\cK{\mathcal{K}}

\def\ph{\varphi}

\def\cV{\mathcal{V}}
\def\cL{\mathcal{L}}
\def\cR{\mathcal{R}}
\def\cU{\mathcal{U}}
\def\cS{\mathcal{S}}
\def\cD{\mathcal{D}}

\def\wh{\widehat}
\def\be{\begin{equation}}
\def\ee{\end{equation}}

\def\wh{\widehat} 
  
\def\rhs{r.h.s.\ }

\def\eg{e.g.\ }
\def\st{s.t.\ }
\def\hc{\text{h.c.}\ }
\def\vr{V_{\text{ren}}}
\def\PL{ \text{P}_{\text{L}} }

\def\pil{\Pi_{\text{L}} }
\def\pih{\Pi_{\text{H}} }

\newtheorem{theorem}{Theorem}

\newtheorem{lemma}[theorem]{Lemma}



\title{A Short Proof of Bose-Einstein Condensation in the Gross-Pitaevskii Regime and Beyond}

\author{Christian Brennecke\thanks{Institute for Applied Mathematics, University of Bonn, Endenicher Allee 60, 53115 Bonn, Germany} 
\and Morris Brooks\thanks{Institute for Mathematics, University of Zurich, Winterthurerstrasse 190, 8057 Zurich, Switzerland}
\and Cristina Caraci\footnotemark[2]
\and Jakob Oldenburg\footnotemark[2]
}

\begin{document}

\maketitle

\begin{abstract}
We consider dilute Bose gases on the three dimensional unit torus that interact through a pair potential with scattering length of order $ N^{\kappa-1}$, for some $\kappa >0$. For the range $ \kappa\in [0, \frac1{43})$, \cite{ABS} proves complete BEC of low energy states into the zero momentum mode based on a unitary renormalization through operator exponentials that are quartic in creation and annihilation operators. In this paper, we give a new and self-contained proof of BEC of the ground state for $ \kappa\in [0, \frac1{20})$ by combining some of the key ideas of \cite{ABS} with the novel diagonalization approach introduced recently in \cite{Br}, which is based on the Schur complement formula. In particular, our proof avoids the use of operator exponentials and is significantly simpler than \cite{ABS}. 
\end{abstract}

\section{Introduction and Main Result}\label{sec:intro}
We consider $N$ interacting bosons in $\Lambda:= \bT^3 = \bR^3/ \bZ^3$ with Hamiltonian 
        \be\label{eq:HN} H_N = \sum_{i=1}^N -\Delta_{x_i} + \sum_{1\leq i<j\leq N} N^{2-2\kappa} V (N^{1-\kappa}(x_i -x_j)), \ee
acting in $L^2_s (\Lambda^N)$, the Hilbert space consisting of functions in $L^2 (\Lambda^N)$ that are invariant with respect to permutations of the $N$ particles. We assume the interaction potential $V \in L^{1}(\bR^3)$ to have compact support, to be radial and to be pointwise non-negative.  

Note that analyzing $H_N$ is equivalent to analyzing the Hamiltonian of $N$ bosons interacting through the unscaled potential $V$ in $ \bR^3/ L\bZ^3$ for $L=N^{1-\kappa}$. In this sense, we consider regimes of strongly diluted systems of bosons with number of particles density $ N^{3\kappa-2} \ll 1 $ (as long as $ \kappa < \frac23$). The case $\kappa=0 $ corresponds to the Gross-Pitaevskii (GP) regime and the case $\kappa =\frac23$ corresponds to the usual thermodynamic limit (with number of particles density equal to one). 

In this paper, we are interested in understanding low energy properties of the Bose gas in regimes that interpolate between the GP and thermodynamic limits. Based on \cite{D, LY}, it is well-known that the ground state energy $ E_N := \inf \text{spec}(H_N)$ is equal to
        \[E_N = 4\pi \mathfrak{a}N^{1+\kappa} +o(N^{1+\kappa}), \]
where $ \mathfrak{a}$ denotes the scattering length of the potential $V$ and where $o(N^{1+\kappa})$ denotes an error of subleading order, that is $ \lim_{N\to\infty} o(N^{1+\kappa})/N^{1+\kappa}=0$. Recall that under our assumptions the scattering length of $V$ is characterized by
        \[\begin{split}
        8\pi \mathfrak{a} &=   \inf \bigg\{ \int_{\bR^3}dx\,\Big( 2|\nabla f(x)|^2+V(x)|f(x)|^2\Big): \lim_{|x|\to\infty}f(x)=1 \bigg\}. 
        \end{split}\]
A question closely related to the computation of the ground state energy is whether the ground state exhibits Bose-Einstein condensation (BEC). If $\psi_N$ denotes the ground state vector, this means that the largest eigenvalue of the associated reduced one particle density matrix $ \gamma_N^{(1)} = \text{tr}_{2,\dots,N} |\psi_N\rangle\langle\psi_N|$ remains of size one in the limit $N\to\infty$:
        \[\liminf_{N\to\infty}\|\gamma_N^{(1)}\|_{\text{op}}>0.\]
Proving BEC in the thermodynamic limit is a difficult open problem in mathematical physics. For strongly diluted systems, on the other hand, there has recently been great progress in proving that low energy states exhibit BEC. The first proof of BEC has been obtained in \cite{LS1} in the GP regime\footnote{It is worth to point out that the arguments of \cite{LS1}, which build on energy bounds from \cite{LY, LSY}, can in fact be used to prove BEC in the parameter range $\kappa\in[0,\frac1{10})$; see also \cite[Chapter 7]{LSSY}.}, implying that for $ \ph_0:=1_{|\Lambda}\in L^2(\Lambda)$ one has that
        \be \label{eq:BEC} \lim_{N\to\infty}   \langle\ph_0,\gamma_N^{(1)} \ph_0\rangle=1. \ee
This result has later been extended to approximate ground states in \cite{LS2, NRS} and the works \cite{BBCS1, BBCS4} have proved \eqref{eq:BEC} with the optimal rate of convergence. Since then, several generalizations and simplified proofs have been obtained in \cite{NNRT, ABS, F, H, BSS1, CCS1, BocSei, NRT}. Notice that such results can be used to derive the low energy excitation spectrum of $H_N$ in accordance with Bogoliubov theory \cite{Bog}, see \eg \cite{BBCS3, BBCS4, NT, BSS2, HST, BCaS, CCS2, BCOPS, Br, COSAS}. 

In recent years, progress has also been made in regimes that interpolate between the GP and thermodynamic limits. Based on unitary renormalizations developed first in the dynamical context \cite{BDS, BS} and in the context of the derivation of the excitation spectrum in the GP regime \cite{BBCS2, BBCS3}, the work \cite{ABS} proves BEC for approximate ground states in regimes $ \kappa \in [0,\frac1{43})$. A different method, that is based on box localization arguments, has been introduced in \cite{F} which proves BEC in the larger parameter range $ \kappa \in [0, \frac25+\epsilon)$, for some sufficiently small $\epsilon>0$. This result represents currently the best available parameter range and it is closely tied to the computation of the second order correction to the ground state energy, which turns out to be of order $ N^{ 5\kappa/2}$ \cite{YY, FS1,BCS, FS2, HHNST}.  

The methods introduced in \cite{ABS} and \cite{F} have both certain advantages. While \cite{F} obtains the currently best parameter range and applies to a large class of potentials including hard-core interactions, it is based on box localization arguments and therefore involves the change of boundary conditions\footnote{To be more precise, the localization procedure of \cite{F} replaces the standard Laplacian in the periodic setting by a more involved localized kinetic energy operator, see \cite[Eq. (2.7)]{F}. For a recent overview that focuses on the key steps of the energy bounds in the simpler translation invariant setting, see \cite{FGJMO}.}. This makes the derivation of suitable lower bounds more complicated, compared to the translation invariant setting, and essentially restricts the method to obtaining lower bounds while upper bounds require separate tools. The method of \cite{ABS}, on the other hand, does not require localization and enables both upper and lower bounds at the same time. However, it only applies to soft potentials satisfying some mild integrability assumption. Moreover, controlling the error terms in the operator expansions quickly becomes rather challenging and this is among the main reasons why the method only works in a much more restricted parameter range. 

In this paper, our goal is to revisit the strategy of \cite{ABS}. However, instead of renormalizing the system through unitary conjugations by quartic operator exponentials, we proceed as in \cite{Br} whose renormalization is based on the Schur complement formula applied to the two body problem and on lifting it in a suitable sense to the $N$ body setting. As a consequence, our proof becomes significantly simpler and shorter compared to the one in \cite{ABS}. Although our results are still only valid in a small parameter range compared to \cite{F}, our arguments are elementary, self-contained and do neither require box localization methods nor operator exponential expansions. 

\begin{theorem}\label{thm:main}
Let $H_N$ be defined as in \eqref{eq:HN} for $ \kappa\in [0, \frac1{20})$ and denote by $ \gamma_N^{(1)}$ the one particle reduced density associated to its normalized ground state vector $\psi_N$. Then 
        \[\lim_{N\to\infty}   \langle\ph_0,\gamma_N^{(1)} \ph_0\rangle=1. \]
\end{theorem}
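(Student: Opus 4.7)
I would first reduce the problem to a coercivity bound for the number-of-excitations operator in the standard way. Let $\cU_N : L^2_s(\Lambda^N) \to \mathcal{F}_+^{\leq N}$ be the Lewin--Nam--Serfaty--Solovej unitary that factors out the condensate, set $\cL_N := \cU_N H_N \cU_N^*$, $\xi_N := \cU_N \psi_N$, and let $\cN_+ = \sum_{p \neq 0} a_p^* a_p$ be the number operator on the excitation Fock space. Since $1 - \langle \ph_0, \gamma_N^{(1)} \ph_0 \rangle = N^{-1}\langle \xi_N, \cN_+ \xi_N \rangle$, Theorem \ref{thm:main} is equivalent to $\langle \xi_N, \cN_+ \xi_N \rangle = o(N)$. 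Conjugation yields the familiar expansion of $\cL_N$ into a kinetic term $\cK = \sum_{p \neq 0} p^2 a_p^* a_p$, quadratic pairings $a_p^* a_{-p}^* + \hc$ with coefficient proportional to $\wh V(p/N^{1-\kappa})$ (plus diagonal quadratic corrections), and cubic and quartic remainders.

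The target lower bound is
\be
\cL_N \geq 4\pi \mathfrak{a} N^{1+\kappa} + c\, \cN_+ - C N^{1+\kappa-\delta}
\ee
for some $c, \delta > 0$, which combined with the matching upper bound $E_N \leq 4\pi \mathfrak{a} N^{1+\kappa}(1+o(1))$ from a Jastrow-type trial state forces $\langle \xi_N, \cN_+ \xi_N \rangle = o(N)$. Following \cite{Br}, I would introduce a momentum threshold $P$ with $N^{\kappa} \ll P \ll N^{1-\kappa}$, split the one-particle excitation space as $\cH_L \oplus \cH_H$ into low- and high-momentum subspaces, and view $\cL_N$ as a block operator with respect to this splitting. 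Since the high-momentum block $\cL_N^{HH}$ is dominated by $\cK|_{\cH_H}$ and hence invertible in a suitable sense, the Schur complement formula reduces positivity of $\cL_N - E$ to positivity of the effective operator
\be
\cL_N^{\text{eff}}(E) := \cL_N^{LL} - E - \cL_N^{LH}(\cL_N^{HH} - E)^{-1} \cL_N^{HL}
\ee
on $\cH_L$. The central observation is that, evaluated on test states close to the condensate, the second term reproduces at leading order precisely the scattering-length renormalization produced by solving the two-body zero-energy equation, effectively replacing the bare Fourier coefficient $\wh V(p/N^{1-\kappa})$ by $8\pi \mathfrak{a}N^{-1}$. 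Unlike the quartic operator exponentials used in \cite{ABS}, this renormalization is an algebraic identity, bypassing the need to control an infinite Baker--Campbell--Hausdorff expansion and its nested commutators.

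With $\cL_N^{\text{eff}}(E)$ in hand, a standard Bogoliubov diagonalization combined with a c-number substitution for the condensate mode extracts $4\pi\mathfrak{a} N^{1+\kappa}$ as the leading contribution and leaves behind a positive quadratic form in $\cN_+|_{\cH_L}$, while the high-momentum component of $\cN_+$ is absorbed directly into $\cK$. The main obstacle will be controlling the error terms generated by the Schur complement procedure: the cubic and quartic cross terms between $\cH_L$ and $\cH_H$ do not reduce cleanly to two-body scattering and must be bounded using a priori estimates on $\cK$ and $\cN_+$, obtained iteratively from a coarse initial bound. The quantitative balance between the cutoff $P$, the scattering-length scale $N^{\kappa-1}$, and the strength of the available a priori bounds is precisely what restricts the argument to the range $\kappa < 1/20$, and I would expect the main technical work of the paper to lie in making these error estimates explicit and tight.
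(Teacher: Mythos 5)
Your high-level philosophy is right — Schur-complement renormalization replacing the operator-exponential expansions of \cite{ABS}, a momentum cutoff, a coercivity bound of the form $H_N\geq 4\pi\mathfrak{a}N^{1+\kappa}+c\,\cN_++o(N)$, a priori bounds on high-momentum excitations, and a matching upper bound on $E_N$. But the central step is misdescribed, and as written it would not go through.

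You propose to view $\cL_N=\cU_N H_N\cU_N^*$ as a block operator over a splitting $\cH_L\oplus\cH_H$ of the one-particle excitation space and to invert $\cL_N^{HH}-E$ in the Schur complement. That high-momentum block is a genuine many-body operator on the Fock space; there is no tractable algebraic inverse for it, and nothing in \cite{Br} or in this paper attempts one. The actual Schur complement in both works is taken on the \emph{two-body} operator $\mathcal{H}_2=-\Delta_{x_1}-\Delta_{x_2}+V_N$ on $L^2(\Lambda^2)$, decomposed by the two-body projections $\pil,\pih$. There the $\pih\mathcal{H}_2\pih$ block is explicitly invertible and the resulting objects $\eta$ and $\vr$ in \eqref{eq:defs1} are concrete two-body kernels satisfying sharp pointwise bounds \eqref{eq:bndvr}--\eqref{eq:bndeta}. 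The genuinely novel move is then to \emph{lift} this two-body identity to the $N$-body setting by replacing $a_r$ with $c_r=a_r+d_r$ (equivalently, expanding $H_N$ around commutators with a quartic generator $\cD_\eta$ rather than conjugating by $e^{\cD_\eta-\cD_\eta^*}$), which yields the operator lower bound \eqref{eq:lwrbnd} at the cost of a three-body error $R_N$. This lifting step is precisely what your plan omits, and without it there is no way to turn the two-body renormalization into a many-body coercivity statement.

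Two further mismatches are worth noting. First, the paper does not pass to the excitation Fock space at all: it works on $L^2_s(\Lambda^N)$ directly and uses $1-\langle\ph_0,\gamma_N^{(1)}\ph_0\rangle=N^{-1}\langle\psi_N,\cN_+\psi_N\rangle$ without any $\cU_N$. Second, there is no Bogoliubov diagonalization and no c-number substitution. Instead, $\cN_+$ is extracted from the modified kinetic term $\sum_r|r|^2 c_r^*c_r$ by the elementary inequality $|r|^2\geq 4\pi^2$ on small nonzero momenta (Lemma \ref{lm:K}), and $4\pi\mathfrak{a}N^{1+\kappa}$ is extracted from the renormalized interaction by approximating $\langle\vr\rangle\approx 8\pi\mathfrak{a}$ and completing a square (Lemma \ref{lm:Vren}); these two very simple lemmas, together with the three-body bound (Lemma \ref{lm:RN}) and the Markov/commutator a priori estimates (Lemma \ref{lm:apriori}), are the entire argument. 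So the proof is substantially more elementary than the Bogoliubov-diagonalization route you sketch.
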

\vspace{0.1cm}
\noindent \textbf{Remarks:} 
\begin{enumerate}[1)]
\item Theorem \ref{thm:main} applies to the ground state vector $\psi_N$ of $H_N$. With some additional effort that involves the use of number of particles localization arguments, we expect that our results could also be proved for approximate ground states $ \phi_N$ that satisfy $ \langle \phi_N, H_N\phi_N\rangle \leq 4\pi \mathfrak{a}N^{1+\kappa}+ o(N)$. To keep our arguments as short and simple as possible, we omit the details and focus on the ground state vector $\psi_N$. 
\item In our proof of Theorem \ref{thm:main}, we assume the relatively mild a priori information that the ground state energy $ E_N $ is bounded from above by $ E_N \leq 4\pi \mathfrak{a}N^{1+\kappa}+ o(N)$, if $\kappa<\frac1{20}$. Based on ideas similar to those presented below, this could be proved with little additional effort in a self-contained way. Since this has already been explained in \cite{Br} (which obtains a more precise upper bound on $ E_N$ for all $\kappa<\frac2{13}$ based on the evaluation of the energy of suitable trial states, see \cite[Theorem 3]{Br}), however, we refer the interested reader to \cite{Br} for the details.
\end{enumerate}

\section{Proof of Theorem \ref{thm:main} }
In the following, let us denote by $ a_k$ and $a^*_k$ the annihilation and, respectively, creation operators associated with the plane waves $ x\mapsto \ph_k(x) :=e^{ikx} \in L^2(\Lambda)$ of momentum $k$, for $k\in \Lambda^*:=2\pi\bZ^3$. They satisfy the canonical commutation relations $ [a_p, a_q^*] = \delta_{p,q}$ and $ [a_p,a_q]=[a_p^*, a_q^*]=0$, and they can be used to express $ H_N$  as 
        \[ H_N = \sum_{r\in \Lambda^*}|r|^2 a^*_ra_r + \frac{N^{\kappa}}{2N}\sum_{p,q,r\in \Lambda^*}  \widehat V(r/N^{1-\kappa}) a^*_{p+r}a^*_{q-r}a_pa_q, \]
where $ \widehat V(r) = \int_{\bR^3}dx\, e^{-irx}V(x)$ denotes the standard Fourier transform of $V$. 

Now, denote by $ V_N$ the two body operator that multiplies by $N^{2-2\kappa}V(N^{1-\kappa}(x_1-x_2)) $ in $ L^2(\Lambda^2)$ and define for $\alpha\in [0, 1-\kappa]$ the low momentum set 
        \be\label{eq:defPL}\PL := \big\{p\in\Lambda^*: |p|\leq N^{\alpha} \big\}. \ee
Denote, moreover, by $\pil:L^2(\Lambda^2)\to L^2(\Lambda^2)$ the orthogonal projection onto $$ \overline { \text{span}(\ph_k\otimes \ph_l:k,l\in \PL)}  $$ and set $ \pih:=1-\pil$. Then, as explained in detail in \cite{Br}, a straightforward application of the Schur complement formula implies the many body lower bound 
        \be \begin{split}\label{eq:lwrbnd}   
        H_N \geq \sum_{r\in \Lambda_+^*}|r|^2 c^*_r c_r + \frac{N^{\kappa}}{2N}\!\!\!\!\sum_{\substack{p,q,r\in\Lambda^*: \\ p,q, p+r, q-r \in \PL}}\!\!\!\!\! \langle \ph_{p+r}\otimes \ph_{q-r}, \vr \ph_p\otimes \ph_q  \rangle   a^*_{p+r}a^*_{q-r}a_pa_q - R_N, 
        \end{split} \ee
where we set $ \Lambda_+^* := \Lambda^*\setminus\{0\}$ as well as
        \be\begin{split}\label{eq:defs1} 
        c_r &:= a_r + \frac{N^{\kappa}}{N}\sum_{\substack{(p,q)\in\PL^2  }}   \langle \ph_{p+q-r}\otimes \ph_{r}, \eta \,  \ph_p\otimes \ph_q  \rangle   a^*_{p+q-r}a_pa_q,\\
        \eta& := N^{1-\kappa}\,\pih \big[ \pih(-\Delta_{x_1}-\Delta_{x_2} + V_N )\pih \big]^{-1}\pih V_N\pil   ,\\
        \vr & := N^{1-\kappa}\big( V_N - V_N \pih \big[ \pih(-\Delta_{x_1}-\Delta_{x_2} + V_N )\pih \big]^{-1}\pih V_N \big), 
        \end{split}\ee
and where the three body error term $R_N$ is given by
         \be\begin{split}\label{eq:defs2} 
        R_N& := \frac{N^{2\kappa}}{N^2}\!\!\!\sum_{\substack{r, p,q, s, t\in\Lambda^* }} \!\!\! |r|^2 \langle \eta \,\ph_p\otimes \ph_q, \ph_{p+q-r}\otimes \ph_{r}    \rangle  \langle \ph_{s+t-r}\otimes \ph_{r}, \eta \,  \ph_{s}\otimes \ph_{t}  \rangle  \\
        &\hspace{3cm}\times     a_p^*a^*_q  a^*_{s+t-r}a_{p+q-r}a_{s}a_{t} . 
        \end{split}\ee
Notice that we used that both $\eta$ and $\vr$ preserve the total momentum in $ L^2(\Lambda^2)$. 

Let us briefly comment on the main ideas leading to \eqref{eq:lwrbnd}. Viewing $ V_N = \pil V_N \pil + (\pih V_N \pil+\text{h.c.}) + \pih V_N\pih$ and hence the Hamiltonian $\mathcal{H}_2:=-\Delta_{x_1}-\Delta_{x_2}+V_N$ of the two body problem as a block matrix, one can block-diagonalize the latter using the Schur complement formula. This renormalizes the low-momentum interaction to $N^{\kappa-1} \pil\vr\pil$ while the large momentum interaction $\pih V_N\pih $ is left untouched. The (non-symmetric) map that block-diagonalizes $\mathcal{H}_2$ is of the form $ \cS_\eta= 1+N^{\kappa-1}\eta$ and, in order to obtain an analogous renormalization of the many body interaction, it seems natural to lift $\cS_\eta$ to the unitary generalized Bogoliubov transformation 
        \[\begin{split}
        & \cU_\eta := \exp(\cD_\eta-\cD_\eta^*)\, \big(\approx 1+ \cD_\eta-\cD_\eta^*\big), \text{ where }\\
           & \cD_\eta:= \frac{N^{\kappa}}{2N}\sum_{\substack{ p,q,r\in\Lambda^*: (p,q)\in \PL^2,\\ (p-r,q+r)\in (\PL^2)^c}}\langle \ph_{p-r}\otimes \ph_{q+r}, \eta \,  \ph_p\otimes \ph_q  \rangle   a^*_{p-r}a^*_{q+r} a_pa_q.
        \end{split} \]
On a conceptual level, this approach corresponds to the one pursued in \cite{ABS} (in particular, the role of $\eta$ defined in \eqref{eq:defs1} is similar to that of $\eta_{\text H}$ defined in \cite{ABS} through the zero energy scattering equation). Compared to that, a key idea of \cite{Br} is to expand $H_N$ directly around powers of suitably modified creation and annihilation operators, including \eg $ c_r = a_r + [a_r, \cD_\eta] \,(\approx \cU_\eta^* a_r\,\cU_\eta $). This leads to the low momentum renormalization of the many body interaction in a simple way and avoids the use of operator exponential expansions. Notice that this approach is reminiscent of  previously introduced ideas in \cite{BFS, FS1}. Finally, let us stress that, although the bound \eqref{eq:lwrbnd} is all we need in view of Theorem \ref{thm:main}, \cite{Br} derives in fact exact algebraic identities. Similarly as in \cite{ABS}, what is dropped in \eqref{eq:lwrbnd} is the non-renormalized high momentum part of the potential energy. 

Proceeding as in \cite[Lemma 1]{Br}, let us record the useful upper bounds
        \be \label{eq:bndvr} \begin{split}
        \big|\langle \ph_{k_1}\otimes \ph_{k_2}, \vr \ph_{k_3}\otimes \ph_{k_4}  \rangle \big|&\leq C, \\
        \big|\langle \ph_{k_1}\otimes \ph_{k_2}, \vr \ph_{k_3}\otimes \ph_{k_4} \rangle   -   8\pi \mathfrak{a} \big| \! &\leq   C  N^{\kappa -1
        }    \Big(N^{\alpha}  +  \sum_{i=1}^4 N^{-\alpha} |k_i|^2\Big),
        \end{split}\ee
for all $ k_1,k_2,k_3,k_4\in \Lambda^*$ satisfying $k_1+k_2=k_3+k_4$ and $\langle \ph_{k_1}\otimes \ph_{k_2}, \vr \ph_{k_3}\otimes \ph_{k_4}  \rangle=0$ in case $k_1+k_2\neq k_3+k_4$. The bounds \eqref{eq:bndvr} imply in particular that   
        \be \label{eq:bndeta} 
        \big|\langle \ph_{k_1}\otimes \ph_{k_2}, \eta \,  \ph_{k_3}\otimes \ph_{k_4}  \rangle\big|\leq  \frac{C\,\delta_{k_1+k_2,k_3+k_4}}{|k_1|^2+ |k_2|^2} \textbf{1}_{( \PL^2)^c}((k_1,k_2))\textbf{1}_{ \PL^2}((k_3,k_4)).
        \ee 
For completeness, we prove \eqref{eq:bndvr} and \eqref{eq:bndeta} in Appendix \ref{appx}, following \cite[Appendix A]{Br}.

Based on \eqref{eq:lwrbnd}, \eqref{eq:bndvr} and \eqref{eq:bndeta}, the proof of Theorem \ref{thm:main} follows by carefully estimating the three terms on the \rhs in \eqref{eq:lwrbnd} and by combining these estimates with some mild a priori information on the ground state energy. Before summarizing the key steps, let us introduce the following additional notation: for every $\zeta\geq 0$, we set
        \[ \cN_{>\zeta} := \sum_{r\in\Lambda^*: |r|>\zeta} a^*_ra_r \]
and similarly, we define $ \cN_{\geq \zeta}, \cN_{<\zeta}$ and $\cN_{\leq\zeta}$. Moreover, we set $\cN :=\cN_{\geq 0}\ (\equiv N) $, $ \cN_+:=\cN_{>0}$ and $ \cK := \sum_{r\in\Lambda^*_+}|r|^2a^*_ra_r$. It is an elementary observation that 
        \[\label{eq:BECcrit}1-\langle \ph_0, \gamma_N^{(1)}\ph_0\rangle = N^{-1} \langle\psi_N ,\cN_+\psi_N\rangle.\]
Equipped with the previous identity, the key of our proof is to derive a coercivity bound
        \[ H_N \geq 4\pi \mathfrak{a} N^{1+\kappa} + c\, \cN_+ + \cE   \]
for some constant $c>0$ and some error $\cE$ which is of size $o(N)$ in the ground state $\psi_N$. The number of excitations $\cN_+$ is extracted from the modified kinetic energy operator in \eqref{eq:lwrbnd} (the first term on the \rhs in \eqref{eq:lwrbnd}) while the leading order energy $4\pi \mathfrak{a} N^{1+\kappa}$ is extracted from the renormalized potential energy (the second term on the \rhs in \eqref{eq:lwrbnd}). This is explained in Lemmas \ref{lm:K} and \ref{lm:Vren} which represent the key of the whole argument. 

The error terms, on the other hand, turn all out to be related to the number of excitations with large momenta. Following \cite{ABS}, the key tool we use below to control such errors is a simple Markov bound combined with the trivial fact that $E_N\leq C N^{1+\kappa}$:
        \be\label{eq:markov} \cN_{ > N^{\beta}}\leq N^{-2\beta}\cK \leq N^{-2\beta}H_N.  \ee
In particular $ \langle \psi_N, \cN_{ > N^{\beta}}\psi_N \rangle\leq C N^{1+\kappa-2\beta} = o(N)$ as soon as $ 2\beta >\kappa$, if $\psi_N$ denotes an approximate ground state vector. In Lemma \ref{lm:apriori}, we slightly generalize the bound \eqref{eq:markov} to products of the kinetic energy with number of particles operators for large momenta. 

\begin{lemma}\label{lm:K}
Suppose $\delta \in (\frac{\kappa}{2},\alpha)$, then we have that
        \[\sum_{r\in \Lambda_+^*}|r|^2 c^*_r c_r \geq 4\pi^2 \big(\cN_{< N^{ \delta}} -a^*_0a_0 \big) + \cE_\delta \]
for a self-adjoint operator $\cE_\delta$ which satisfies for some $C>0$ and $N$ large enough that
        \[
         \pm \cE_{\delta} 
         \leq  CN^{\kappa+\frac{\delta}{2}-\frac{3\alpha}{2}-1}
          (\cK + N )\cN_{> N^\alpha/3}.
        \]
\end{lemma}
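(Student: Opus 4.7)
The plan is to expand $c_r = a_r + b_r$ where
\[ b_r := \frac{N^\kappa}{N}\sum_{(p,q)\in\PL^2}\langle\ph_{p+q-r}\otimes\ph_r,\eta\,\ph_p\otimes\ph_q\rangle\,a^*_{p+q-r}a_pa_q \]
is the cubic correction. Expanding $c_r^* c_r$ and summing gives
\[ \sum_{r\in\Lambda_+^*}|r|^2 c_r^*c_r = \cK + \sum_{r\in\Lambda_+^*}|r|^2(a_r^*b_r + b_r^*a_r) + \sum_{r\in\Lambda_+^*}|r|^2 b_r^*b_r. \]
Since $|r|^2\geq 4\pi^2$ on $\Lambda_+^*$, we have $\cK - 4\pi^2(\cN_{<N^\delta}-a_0^*a_0)\geq 0$, and $\sum_r|r|^2 b_r^*b_r\geq 0$ trivially; dropping these two non-negative pieces and setting $\cE_\delta := \sum_r|r|^2(a_r^*b_r + b_r^*a_r)$ reduces the claim to the operator estimate $\pm\cE_\delta \leq CN^{\kappa+\delta/2-3\alpha/2-1}(\cK+N)\cN_{>N^\alpha/3}$.

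The key structural fact is the support of the $\eta$-kernel: $(p,q)\in\PL^2$ together with $(p+q-r,r)\in(\PL^2)^c$ jointly force $\max(|r|,|p+q-r|)>N^\alpha$, so the 4-operator product $a_r^*a^*_{p+q-r}a_pa_q$ always creates at least one excitation with momentum $>N^\alpha/3$, producing the $\cN_{>N^\alpha/3}$ factor on the right-hand side. I would split the sum over $r$ into $|r|>N^\alpha/3$ (where $a_r^*$ creates the high mode, and $\sum_{|r|>N^\alpha/3}|r|^2 a_r^*a_r\leq \cK\cN_{>N^\alpha/3}$ follows from an elementary occupation-number identity) and $|r|\leq N^\alpha/3$ (where the support condition forces $|p+q-r|>N^\alpha$, so that it is $a^*_{p+q-r}$ that creates the high mode). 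In each regime I apply the operator Cauchy--Schwarz $\pm(a_r^*b_r+b_r^*a_r)\leq \lambda_r\,a_r^*a_r + \lambda_r^{-1} b_r^*b_r$ with a weight $\lambda_r$ tailored to absorb the $a$-side into $(\cK+N)\cN_{>N^\alpha/3}$. The $b$-side $\sum_r|r|^2\lambda_r^{-1}b_r^*b_r$ is then estimated by normal-ordering $b_r^*b_r$ via the CCR into a positive 4-operator diagonal piece and a sign-indefinite 6-operator off-diagonal piece, using the pointwise bound $|\eta|\leq C/(|p+q-r|^2+|r|^2)$ from \eqref{eq:bndeta}, the support restriction $|p+q-r|^2+|r|^2\geq N^{2\alpha}$, and the volume estimate $|\PL|\leq CN^{3\alpha}$ of the low-momentum modes. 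Carefully balancing these factors with the Cauchy--Schwarz weights and with the weight $|r|^2$ restricted through $|r|<N^\delta$ (contributing the $N^{\delta/2}$ factor) should produce the claimed exponent $\kappa+\delta/2-3\alpha/2-1$.

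The main technical obstacle is the bookkeeping of the momentum sums in $\sum_r|r|^2b_r^*b_r$: the pointwise $\eta$-bound alone does not yield a convergent sum over all $r$, so one must use the $|r|^2$ weight together with the structure of the constraint $(p+q-r,r)\in(\PL^2)^c$ to get a usable operator inequality, and the 6-operator off-diagonal piece requires a further Cauchy--Schwarz to be absorbed into powers of $\cN$. A second subtlety is the consistent choice of weights $\lambda_r$ across the low- and high-$|r|$ regimes so that the two sides of the operator Cauchy--Schwarz assemble into the product $(\cK+N)\cN_{>N^\alpha/3}$ on the right rather than a sum of kinetic and number contributions; this relies on the commutativity of $\cK$ and $\cN_{>N^\alpha/3}$ and on the occupation-number identity $\sum_{|r|>N^\alpha/3}|r|^2 a_r^*a_r\leq \cK\,\cN_{>N^\alpha/3}$. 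The hypothesis $\delta\in(\kappa/2,\alpha)$ enters through $\delta<\alpha$ (cleanly separating the two regimes at scale $N^\alpha/3$) and through $\delta>\kappa/2$ (ensuring that the resulting error is compatible with the Markov bound \eqref{eq:markov} when the estimate is inserted into the main argument of Theorem \ref{thm:main}).
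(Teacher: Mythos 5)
Your algebraic decomposition $\sum_r|r|^2 c_r^*c_r = \cK + \sum_r|r|^2(a_r^*b_r + b_r^*a_r) + \sum_r|r|^2 b_r^*b_r$ is correct, and dropping $\cK - 4\pi^2(\cN_{<N^\delta}-a_0^*a_0)\geq 0$ and $\sum_r|r|^2 b_r^*b_r\geq 0$ is a legitimate way to obtain a lower bound. However, the resulting $\cE_\delta := \sum_{r\in\Lambda_+^*}|r|^2(a_r^*b_r + b_r^*a_r)$ is the \emph{full} cross term and does not depend on $\delta$ at all, so it cannot satisfy a bound carrying the factor $N^{\delta/2}$ uniformly over $\delta\in(\kappa/2,\alpha)$. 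The paper instead first uses positivity to throw away the contribution from $|r|\geq N^\delta$ and to replace $|r|^2$ by $4\pi^2$, i.e.\ it bounds $\sum_r|r|^2c_r^*c_r\geq 4\pi^2\sum_{0<|r|<N^\delta}c_r^*c_r$, and only \emph{then} expands $c_r^*c_r$. This means the paper's $\cE_\delta$ is $4\pi^2\sum_{0<|r|<N^\delta}(d_r^*a_r + a_r^*d_r)$, restricted to $|r|<N^\delta$. The restriction is what produces the $N^{\delta/2}$ factor (through $\sum_{0<|r|<N^\delta}|r|^{-2}\sim N^\delta$ in the Cauchy--Schwarz), and, crucially, it forces $|p+q-r|>N^\alpha$ whenever the $\eta$-kernel is nonzero (since $|r|<N^\delta<N^\alpha$), which in turn forces one of the \emph{annihilated} momenta $p,q$ to be $>N^\alpha/3$; that is the actual source of the $\cN_{>N^\alpha/3}$ factor on the right-hand side.

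Your ``key structural fact'' conflates creating a high-momentum mode with being able to extract $\cN_{>N^\alpha/3}$ from the quadratic form: in the paper's estimate, $\cN_{>N^\alpha/3}$ arises because $a_pa_q$ (the annihilation part, which acts first) contains a high mode, not because $a_r^*a_{p+q-r}^*$ creates one. This matters because for $|r|>N^\alpha$, which your $\cE_\delta$ includes, the constraint $(p+q-r,r)\in(\PL^2)^c$ is satisfied automatically and $p,q$ may both be small (for instance $p=q=0$, giving terms $\propto\frac{N^\kappa}{N}a_r^*a_{-r}^*a_0a_0+\hc$); then the annihilation side $a_pa_q$ contains no high mode and the quadratic form is not absorbed multiplicatively into $(\cK+N)\cN_{>N^\alpha/3}$. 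Likewise, your plan to apply the operator Cauchy--Schwarz $\pm(a_r^*b_r+b_r^*a_r)\leq\lambda_ra_r^*a_r+\lambda_r^{-1}b_r^*b_r$ reintroduces $b_r^*b_r$, and normal ordering $\sum_r|r|^2\lambda_r^{-1}b_r^*b_r$ produces a diagonal four-operator piece supported entirely on low momenta (of rough size $N^{2\kappa+\alpha}$), which again has no $\cN_{>N^\alpha/3}$ factor. All of these obstacles disappear at once if you restrict $|r|<N^\delta$ \emph{before} expanding $c_r^*c_r$, as the paper does; the remainder of the estimate is then a single weighted Cauchy--Schwarz over $(p,q,r)$ as in your sketch.
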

\begin{proof}
Recalling the definition of $c_r$ in \eqref{eq:defs1} and setting
        \[d_r = \frac{N^{\kappa}}{N}\sum_{\substack{(p,q)\in\PL^2  }}   \langle \ph_{p+q-r}\otimes \ph_{r}, \eta \,  \ph_p\otimes \ph_q  \rangle   a^*_{p+q-r}a_pa_q,\]
so that $c_r = a_r + d_r$, we lower bound 
        \[\begin{split}
         &\sum_{r\in \Lambda_+^*}|r|^2 c^*_r c_r
         -4\pi^2 \big(\cN_{< N^{ \delta}} -a^*_0a_0 \big)\\
         &\geq \sum_{r\in \Lambda_+^*:0 < |r| < N^\delta}4\pi^2 c^*_r c_r
         -\sum_{r\in \Lambda_+^*:0 < |r| < N^\delta} 4\pi^2a^*_r a_r\\
          &= \sum_{r\in \Lambda_+^*:0 < |r| < N^\delta} 4\pi^2\big(   d^*_ra_r +a^*_r d_r + d^*_r d_r\big)\\
         &\;\geq   \frac{4\pi^2N^{\kappa}}{N} \!\!\!\!\!\!\!\sum_{p,q, r\in\Lambda^*: 0 < |r| < N^\delta }   \!\!\!  \langle \ph_{p+q-r}\otimes \ph_{r}, \eta \,  \ph_p\otimes \ph_q  \rangle  a^*_r a^*_{p+q-r}a_pa_q +\text{h.c.},  
        \end{split} \]
where in the first and last steps, we used the positivity of $c^*_rc_r\geq0 $ and $d^*_rd_r\geq 0$, respectively. With the bound \eqref{eq:bndeta} and Cauchy-Schwarz, we then obtain for $ \xi\in L^2_s(\Lambda^N)$ 
        	\[\begin{split}
        	& \bigg|  N^{\kappa-1}  \sum_{p,q, r\in\Lambda^*: 0 < |r| < N^\delta }      \langle \ph_{p+q-r}\otimes \ph_{r}, \eta \,  \ph_p\otimes \ph_q  \rangle \langle\xi a^*_r a^*_{p+q-r}a_pa_q \xi\rangle \bigg| \\
        	&\; \leq  C N^{\kappa-2\alpha-1}\!\!\!\sum_{\substack{(p,q,r)\in P_L^3: \\ 0 < |r| < N^\delta,\, |p|> N^\alpha/3,\, p+q-r\in  \PL^c }} \frac{|r|}{|q|+1}\| a_r a_{p+q-r}\xi\|    \frac{|q|+1}{|r|}\|a_p a_q \xi\|\\
         & \; \leq CN^{\kappa+\frac{\delta}{2}-\frac{3\alpha}{2}-1}
          \langle \xi, (\cK + N )\cN_{> N^\alpha/3}\xi\rangle.
        \end{split}\]
Notice that due to the constraint $ p+q-r \in \PL^c $ and the condition $|r| < N^\delta$ for $\delta < \alpha$, at least one of the momenta $p$ and $q$ has to be larger than $N^\alpha/3$ for large $N$.
\end{proof}

\begin{lemma}\label{lm:Vren}
There exists a constant $C>0$ such that 
        \begin{equation}\begin{split}
        \label{eq:Vren}&\frac{N^{\kappa}}{2N}\!\!\!\sum_{\substack{p,q,r\in\Lambda^*: \\ p,q, p+r, q-r \in \PL}}\!\!\!\!\! \langle \ph_{p+r}\otimes \ph_{q-r}, V_{\emph{ren}}  \ph_p\otimes \ph_q  \rangle   a^*_{p+r}a^*_{q-r}a_pa_q \\
        &\;\geq 4\pi \mathfrak{a} N^{1+\kappa} - CN^\kappa\cN_{>N^\alpha} -CN^{\kappa +3\alpha} - C N^{2\kappa +2\alpha-1} (\cK + N)
        \end{split}\end{equation}
\end{lemma}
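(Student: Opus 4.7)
The plan is to first replace the matrix element of $V_{\text{ren}}$ by its approximate ``constant'' value $8\pi \mathfrak{a}$ via \eqref{eq:bndvr}, then to exhibit the leading order $4\pi \mathfrak{a} N^{1+\kappa}$ from a sum-of-squares representation of the resulting constrained quartic operator, and finally to control the residual approximation error by Cauchy-Schwarz together with standard identities relating the kinetic weight $|p|^2$ to the operator $\cK$.

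I would start by writing the \lhs of \eqref{eq:Vren} as
\[
\frac{4\pi \mathfrak{a} N^\kappa}{N}\, \Omega + E_{\text{app}},
\qquad
\Omega := \sum_{\substack{p,q,r \in \Lambda^*:\\ p,q,p+r,q-r \in \PL}} a^*_{p+r} a^*_{q-r} a_p a_q,
\]
with $E_{\text{app}}$ collecting the residual $V_{\text{ren}} - 8\pi \mathfrak{a}$ applied to the same quartic operator. The crux of the argument is an exact algebraic identity for $\Omega$: relabelling $(k_1,k_2,k_3,k_4) = (p+r, q-r, p, q)$ (so each $k_i \in \PL$ with $k_1+k_2 = k_3+k_4$) and applying the canonical commutation relation $a^*_{k_2} a_{k_3} = a_{k_3} a^*_{k_2} - \delta_{k_2, k_3}$ once to normal-order should yield
\[
\Omega \;=\; \sum_{r \in \Lambda^*} T_r^* T_r \;-\; |\PL|\, \cN_L,
\qquad T_r := \sum_{l:\, l,\, l+r \in \PL} a^*_l a_{l+r},
\]
where $\cN_L := \sum_{l \in \PL} a^*_l a_l$. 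In particular $T_0 = \cN_L$, and since every $T_r^* T_r \geq 0$, retaining only the $r = 0$ contribution gives the crucial operator lower bound $\Omega \geq \cN_L^2 - |\PL|\, \cN_L$. On $N$-particle states, $\cN_L = N - \cN_{>N^\alpha}$ then yields $\cN_L^2 \geq N^2 - 2N \cN_{>N^\alpha}$, while $|\PL| \leq CN^{3\alpha}$ gives $|\PL|\cN_L \leq CN^{3\alpha + 1}$; multiplying by $4\pi \mathfrak{a} N^\kappa / N$ immediately produces $4\pi \mathfrak{a} N^{1+\kappa}$ together with the first two error contributions $-CN^\kappa \cN_{>N^\alpha}$ and $-CN^{\kappa + 3\alpha}$.

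To bound $E_{\text{app}}$, I would split the coefficient $V_{\text{ren}} - 8\pi \mathfrak{a}$ using \eqref{eq:bndvr} into a $k$-independent part of magnitude $\leq CN^{\kappa + \alpha - 1}$ and a part of magnitude $\leq CN^{\kappa - \alpha - 1}\sum_i |k_i|^2$. Both are handled by the inequality $|\langle a_{k_1}a_{k_2}\xi, a_{k_3}a_{k_4}\xi\rangle| \leq \|a_{k_1}a_{k_2}\xi\|\|a_{k_3}a_{k_4}\xi\|$ combined with Cauchy-Schwarz, together with the combinatorial observation that for fixed $(k_1,k_2) \in \PL^2$ the constraint $k_3+k_4 = k_1+k_2$ leaves at most $|\PL|$ admissible $(k_3,k_4) \in \PL^2$. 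The first part reduces to expectations of $|\PL|\, \cN_L(\cN_L-1)$ and is absorbed into $CN^{\kappa+3\alpha}$ by the hypothesis $\alpha > \kappa/2$ (from $\delta \in (\kappa/2,\alpha)$); the second, via the identity $\sum_{p,q\in\PL} |p|^2 a^*_q a^*_p a_p a_q = (\cN_L-1)\,\cK_L \leq N\cK$, yields the bound $CN^{2\kappa+2\alpha-1}(\cK + N)$ of the lemma.

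The main obstacle I expect is the discovery and verification of the sum-of-squares identity for $\Omega$: once it is written down, the operator lower bound $\Omega \geq \cN_L^2 - |\PL|\cN_L$ follows immediately from positivity of each $T_r^* T_r$, and this is what delivers the leading constant $4\pi \mathfrak{a} N^{1+\kappa}$ without any a priori BEC input. The remaining error estimates are routine Cauchy-Schwarz bookkeeping, although care is needed to correctly track the combinatorial factor $|\PL|$ that arises from the momentum-conservation constraint.
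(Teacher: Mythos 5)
Your extraction of the leading order is exactly the paper's argument: the sum-of-squares identity $\Omega = \sum_r T_r^* T_r - |\PL|\,\cN_L$ with $T_r = \sum_{l:\,l,l+r\in\PL} a^*_l a_{l+r}$ is correct (the paper writes the subtracted term as $\sum_{p,r:\,p,p+r\in\PL} a^*_{p+r}a_{p+r}$, which equals $|\PL|\,\cN_L$), keeping only $r=0$ gives $\Omega\geq\cN_L^2-|\PL|\,\cN_L$, and substituting $\cN_L = N-\cN_{>N^\alpha}$ reproduces $4\pi\mathfrak{a}N^{1+\kappa}-CN^\kappa\cN_{>N^\alpha}-CN^{\kappa+3\alpha}$. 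This part is fine.

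The gap is in the estimate of the replacement error $E_{\text{app}}$. For the ``$k$-independent'' part you claim a bound absorbed into $CN^{\kappa+3\alpha}$, but your own bookkeeping does not deliver that. With the unweighted Cauchy--Schwarz plus the counting argument ($|\PL|$ admissible $(k_3,k_4)$ for fixed $(k_1,k_2)$), the contribution is
\[
\frac{N^\kappa}{2N}\cdot CN^{\kappa+\alpha-1}\cdot |\PL|\,\langle\xi,\cN_L(\cN_L-1)\xi\rangle \leq CN^{2\kappa+\alpha-2}\cdot N^{3\alpha}\cdot N^2 = CN^{2\kappa+4\alpha},
\]
and $2\kappa+4\alpha\leq\kappa+3\alpha$ would require $\kappa+\alpha\leq 0$, which never holds; the condition $\alpha>\kappa/2$ is irrelevant here. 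The same loss contaminates your ``$|k_i|^2$'' part: the counting factor $|\PL|\approx N^{3\alpha}$ is too crude by $N^{2\alpha}$. What the paper does instead is apply a \emph{single weighted} Cauchy--Schwarz to the full error coefficient $|\langle\vr\rangle-8\pi\mathfrak{a}|\leq CN^{\kappa+\alpha-1}$ (using $|k_i|\leq 2N^\alpha$), inserting ratios of the form $\frac{|p+r|+1}{|p|+1}$. This replaces the combinatorial factor $|\PL|\approx N^{3\alpha}$ by $\sum_{l\in\PL}(|l|+1)^{-2}\approx N^\alpha$, and the quadratic weight on the other side produces the kinetic energy $\cK$, yielding precisely $CN^{2\kappa+2\alpha-1}(\cK+N)$. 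Without the weight, your argument proves only a strictly weaker inequality than the stated lemma; you should redo this step following the weighted Cauchy--Schwarz pattern used in Lemma~\ref{lm:K}.
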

\begin{proof}
We use the bound \eqref{eq:bndvr} together with the fact that $|p|, |q|, |r|\leq 2 N^\alpha$ if $p,q,p-r,q+r\in \PL $ to replace $\vr$ as follows: for every $\xi\in L^2_s(\Lambda^N)$, we have that
        \[\begin{split}
        &\frac{N^{\kappa}}{2N}\!\sum_{\substack{p,q,r \in \Lambda^*:\\
        p, q, p+r, q-r \in P_L}}\! 
        \big| \langle \ph_{p+r}\otimes \ph_{q-r}, \vr \ph_p\otimes \ph_q     \rangle -8\pi \mathfrak{a}\big|  |\langle \xi, a^*_{p+r}a^*_{q-r}a_pa_q\xi\rangle|\\
        &\leq C N^{2\kappa + \alpha - 2}\!\sum_{\substack{p,q,r \in \Lambda^*:\\
        p, q, p+r, q-r \in P_L}}\!   
        \frac{|p+r|+1}{|p|+1}\|a_{p+r}a_{q-r}\xi\|
        \frac{|p|+1}{|p+r|+1}\|a_pa_q\xi\|\\
        &\leq C N^{2\kappa +2\alpha-1} \langle \xi,  (\cK + N) \xi \rangle.
        \end{split}\]
As a consequence, we get the lower bound 
        \[\begin{split}
         & \frac{N^{\kappa}}{2N}\!\!\!\sum_{\substack{p,q,r\in\Lambda^*: \\ p,q, p+r, q-r \in \PL}}\!\!\!\!\! \langle \ph_{p+r}\otimes \ph_{q-r}, V_{\emph{ren}}  \ph_p\otimes \ph_q  \rangle   a^*_{p+r}a^*_{q-r}a_pa_q \\
        &\geq \frac{4\pi \mathfrak{a} N^{\kappa}}{N}\!\!\!\sum_{\substack{p,q,r\in\Lambda^*: \\ p,q, p+r, q-r \in \PL}}\!\!\!\!\!   a^*_{p+r}a^*_{q-r}a_pa_q -  C N^{2\kappa +2\alpha-1}  (\cK + N). 
        \end{split}\]
The lemma now follows by combining this estimate with the lower bound
        \[\begin{split}
           &\frac{4\pi \mathfrak{a} N^{\kappa}}{N}\!\!\!\sum_{\substack{p,q,r\in\Lambda^*: \\ p,q, p+r, q-r \in \PL}}\!\!\!\!\!   a^*_{p+r}a^*_{q-r}a_pa_q  \\
           & \;= \frac{4\pi \mathfrak{a} N^{\kappa}}{N} \sum_{r\in\Lambda^*} \bigg(\sum_{\substack{ q\in\PL:\\q+r\in \PL}} a^*_{q}a_{q+r}\bigg)^* \bigg(\sum_{\substack{ q\in\PL:\\q+r\in \PL}} a^*_{q}a_{q+r}\bigg) - \frac{4\pi \mathfrak{a} N^{\kappa}}{N}\!\!\!\sum_{\substack{p,r\in\Lambda^*: \\ p, p+r \in \PL}}\!\!\!\!\!   a^*_{p+r} a_{p+r}\\
            &\geq \frac{4\pi \mathfrak{a} N^{\kappa}}{N}   \bigg(\sum_{\substack{ q\in\PL}} a^*_{q}a_{q}\bigg)^* \bigg(\sum_{\substack{ q\in\PL}} a^*_{q}a_{q}\bigg) - \frac{4\pi \mathfrak{a} N^{\kappa}}{N}\!\!\!\sum_{\substack{p,r\in\Lambda^*: \\ p, p+r \in \PL}}\!\!\!\!\!   a^*_{p+r} a_{p+r}\\
           &\;=  \frac{4\pi \mathfrak{a} N^{\kappa}}{N}\big(N- \cN_{>N^{\alpha}}\big)^2- \frac{4\pi \mathfrak{a} N^{\kappa}}{N}\!\!\!\sum_{\substack{p,r\in\Lambda^*: \\ p, p+r \in \PL}}\!\!\!\!\!   a^*_{p+r} a_{p+r}\\
           &\;\geq 4\pi \mathfrak{a} N^{1+\kappa} - 8\pi \mathfrak{a} N^{\kappa}\cN_{>N^{\alpha}} - C N^{\kappa+3\alpha},
        \end{split}\]    
      where in the last step we dropped the positive contribution proportional to $\cN_{>N^{\alpha}}^2$ and where we used that $ \cN_{\leq N^{\alpha}}\leq N$ as well as $ |\PL|\leq CN^{3\alpha}$.  
\end{proof}

\begin{lemma}\label{lm:RN}
Let $R_N$ be as in \eqref{eq:defs2} and let $0\leq \beta < \alpha$. Then, there exists $C>0$ such that for $N$ large enough, we have that
\[\begin{split}
        \pm R_N 
        &\leq CN^{2\kappa-2\alpha-2}
        \Big(N^{4\alpha} (\cN_{>N^{\beta}}+N^{3\alpha})
        + N^{\frac52\alpha + \frac32\beta +\frac12} (\cN_{>N^{\beta}}+N^{3\alpha})^{\frac12}
        +N^{\frac32\alpha+\frac52\beta+1}\Big)\\
        &\hspace{2.05cm}\times\big(\cK + N + N^{5\beta}\big) \big(\cN_{>N^{\alpha}/3}+1\big). \qedhere
        \end{split}\]
\end{lemma}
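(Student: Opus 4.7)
The plan is to bound $|\langle \xi, R_N \xi\rangle|$ for arbitrary $\xi \in L^2_s(\Lambda^N)$ via a Cauchy--Schwarz decomposition of the six-operator product in $R_N$, the matrix-element estimate \eqref{eq:bndeta}, and a case split of the remaining momentum sums according to the position of each momentum relative to the thresholds $N^\alpha$ and $N^\beta$; since $R_N=R_N^*$, the resulting bound on $\langle \xi, R_N \xi\rangle$ gives the claimed operator inequality $\pm R_N\leq \cdots$.

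More concretely, I would first rewrite
\[ \langle \xi, a_p^* a_q^* a_{s+t-r}^* a_{p+q-r} a_s a_t \xi\rangle = \langle a_{s+t-r} a_q a_p \xi,\, a_{p+q-r} a_s a_t \xi\rangle \]
and apply Cauchy--Schwarz, first to the inner product and then (with equal weights) to the outer summation. Using the symmetry under $(p,q)\leftrightarrow (s,t)$, this reduces the problem to estimating a single sum of the form
\[ T := \sum_{r,p,q,s,t} |r|^2\, \bigl|\langle \ph_{p+q-r}\otimes \ph_r,\, \eta\, \ph_p\otimes \ph_q\rangle\bigr|^2\, \|a_{s+t-r} a_q a_p \xi\|^2, \]
carrying the constraints $(p,q),(s,t)\in \PL^2$ and $(p+q-r,r),(s+t-r,r)\in (\PL^2)^c$ inherited from the nonvanishing matrix elements of $\eta$. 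By \eqref{eq:bndeta} the coefficient is bounded by $C|r|^2/(|p+q-r|^2+|r|^2)^2$.

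Next I would split the sum according to the sizes of $r$ and of $w := s+t-r$. For fixed $r$ and $w$, the number of admissible $(s,t)\in \PL^2$ with $s+t-r=w$ is at most $|\PL|\lesssim N^{3\alpha}$. Splitting the $w$-sum into $|w|\leq N^\beta$ (where I use $\sum_{|w|\leq N^\beta}|w|^2\lesssim N^{5\beta}$ together with $\cK$ to bound terms of the form $|w|^2\|a_w\cdots\|^2$) and $|w|>N^\beta$ (where I pass to $\cN_{>N^\beta}$) produces the alternative $\cN_{>N^\beta}+N^{3\alpha}$ appearing in the bound. The constraint $(p+q-r,r)\in (\PL^2)^c$ forces either $|p+q-r|>N^\alpha\geq N^\alpha/3$, in which case $a_{p+q-r}$ can be absorbed into $\cN_{>N^\alpha/3}$, or $|r|>N^\alpha$, in which case the $\eta$-denominator $(|p+q-r|^2+|r|^2)^{-1}\leq N^{-2\alpha}$ provides an explicit gain; this is the ultimate source of the prefactor $N^{2\kappa-2\alpha-2}$. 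The three summands in the bracketed expression in the stated bound then arise by combining these regimes (small vs.\ large $w$, small vs.\ large $r$) with the analogous decomposition of the symmetric half produced by Cauchy--Schwarz.

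The main obstacle is the accurate bookkeeping of the six summation indices and four constraints simultaneously. The variable $r$ is not the momentum of any field operator in $R_N$, so all control on $|r|$ has to come from the $\eta$-denominators $|p+q-r|^2+|r|^2$ and $|s+t-r|^2+|r|^2$; these are coupled to the operator momenta $p+q-r$ and $s+t-r$, which in turn have to be translated into factors of $\cN_{>N^\alpha/3}$ and $\cK$ acting on $\xi$. It is this case-by-case balancing between the coefficient $|r|^2/\bigl[(|p+q-r|^2+|r|^2)(|s+t-r|^2+|r|^2)\bigr]$ and the three-body norm $\|a_w a_q a_p \xi\|$ that produces the precise powers of $N^\alpha$ and $N^\beta$ in the final estimate together with the specific operator $(\cK+N+N^{5\beta})(\cN_{>N^\alpha/3}+1)$.
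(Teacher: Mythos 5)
The proposal diverges from the paper in a way that creates a genuine quantitative gap.

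The paper's proof of Lemma~\ref{lm:RN} first pulls out the factor $N^{-2\alpha}$ from the $\eta$-coefficients (exactly as you describe), but then estimates the remaining sum $\sum |\langle \xi, a_p^*a_q^*a_{s+t-r}^*a_{p+q-r}a_sa_t\xi\rangle|$ by a \emph{weighted} Cauchy--Schwarz in which the two factors $\|a_{s+t-r}a_qa_p\xi\|$ and $\|a_{p+q-r}a_sa_t\xi\|$ carry reciprocal momentum weights such as $\frac{|p|+1}{|s|+1}$ or, in the harder case $\Sigma_{3b}$, insertions of $(\cN_s+1)^{1/2}$, $(\cN_q+1)^{1/2}$; this is combined with a case split of $p,q,s,t$ into ``$\leq N^\beta$'' and ``$>N^\beta$'' (cases (1)--(6)) and a split of which creation momentum exceeds $N^\alpha/3$ (cases (a)--(d)). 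The crucial feature of the weights is that, for fixed $s+t-r$, they produce a $(|s|+1)^{-2}$ decay in the otherwise free sum over $(s,t)$, reducing its multiplicity from $N^{3\alpha}$ (or $N^{3\beta}$) down to $\sim N^\beta$, while simultaneously generating the $|p|^2$ factors that are resummed into $\cK$.

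Your ``equal weights'' Cauchy--Schwarz gives
\[
T=\sum_{r,p,q,s,t}|r|^2\,|\langle \ph_{p+q-r}\otimes\ph_r,\eta\,\ph_p\otimes\ph_q\rangle|^2\,\|a_{s+t-r}a_qa_p\xi\|^2,
\]
in which the $(s,t)$-dependence sits entirely in $s+t-r$, and the squared $\eta$-coefficient provides no decay in $s$ or $t$ separately. Counting $|\{(s,t)\in\PL^2:s+t=w+r\}|\lesssim N^{3\alpha}$ (as you propose) is then a pure loss that cannot be compensated: in the regime $|p|,|q|,|s|,|t|\leq N^\beta$, $|p+q-r|,|s+t-r|>N^\alpha/3$ (the paper's case 1a), your scheme gives roughly
\[
N^{2\kappa-2}T \lesssim N^{2\kappa-2}\cdot N^{-2\alpha}\cdot N^{6\beta}\cdot N^2\,\cN_{>N^\alpha/3},
\]
whereas the paper obtains $N^{2\kappa-2\alpha-2}\cdot N^{4\beta+1}(\cK+N)\cN_{>N^\alpha/3}$. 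The ratio is a factor $N^{2\beta-\kappa}$ in your disfavor, which for the choice $\beta\approx\frac52\kappa$ used in the proof of Theorem~\ref{thm:main} is $\approx N^{4\kappa}$. Plugging into the a priori bounds one finds that your estimate of this contribution is of size $\approx N^{1+1.6\kappa}$ rather than $\approx N^{1-2.4\kappa}$ — it is not $o(N)$ and the whole argument fails in the parameter range $\kappa<\frac1{20}$.

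The missing ideas are therefore (i) the asymmetric, momentum-dependent weights (and, for $\Sigma_{3b}$, the $\cN_s,\cN_q$ insertions) in the Cauchy--Schwarz step, and (ii) the case split of the \emph{annihilation operator} momenta $p,q,s,t$ relative to $N^\beta$ in addition to the $r,w$ split. Your case split on $|r|$ and $|w|$ alone does not distinguish between, say, $\Sigma_{1a}$ and $\Sigma_{6d}$, which are treated very differently in the paper precisely because the gain one can extract from the weights depends on which of $p,q,s,t$ are small. Without these tools, the claimed bound (and in particular the term $N^{4\alpha}(\cN_{>N^\beta}+N^{3\alpha})$, which tracks $\cN_{>N^\beta}$ rather than a raw counting factor) cannot be reproduced.
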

\begin{proof}
Given $ \xi\in L^2_s(\Lambda^N)$, we apply the bound \eqref{eq:bndeta} to get
        \[\begin{split}
            |\langle\xi, R_N\xi\rangle|&\leq \frac{CN^{2\kappa}}{N^2}\!\!\!\sum_{\substack{r\in\Lambda^*_+, p,q, s, t\in\Lambda^*:\\ (p+q-r,r),(s+t-r,r)\in (\PL^2)^c,\\ (p,q),(s,t)\in \PL^2 }} \!\!\!  \frac{|r|^2 | \langle \xi, a_p^* a_q^*  a_{s+t-r}^* a_{p+q-r}a_{s}a_{t}\xi\rangle|}{(|p+q-r|^2+ |r|^2)(|s+t-r|^2+ |r|^2)}\\
            &\leq  CN^{2\kappa-2\alpha-2} \!\!\!\sum_{\substack{r\in\Lambda^*_+, p,q, s, t\in\Lambda^*:\\ (p+q-r,r),(s+t-r,r)\in (\PL^2)^c,\\ (p,q),(s,t)\in\PL^2 }} \!\!\!   |\langle \xi, a_p^* a_q^*  a_{s+t-r}^* a_{p+q-r}a_{s}a_{t}\xi\rangle|.
        \end{split}\]
In order to control the sum on the right hand side, we split it according to two types of restrictions: first, consider another scale $N^{\beta}$, for $\beta < \alpha$, and consider the cases in which the momenta
$p,q,s,t\in \PL^4$ are smaller or greater than $N^\beta$. We consider the cases 
\be\label{eq:cases1}\begin{split} 
&(1)\;\;|p|, |q|, |s|, |t| \leq N^\beta,\\
    &(3)\;\;|p|, |q| > N^\beta \text{ and } |s|, |t| \leq N^\beta  , \\
    &(5)\;\;|p|, |q|, |s| > N^\beta \text{ and } |t| \leq N^\beta  ,
\end{split}
\hspace{1cm}
\begin{split}
     &(2)\;\;|p| > N^\beta \text{ and } |q|, |s|, |t| \leq N^\beta,\\
    &(4)\;\;|p|, |s| > N^\beta \text{ and } |q|, |t| \leq N^\beta,\\
    &(6)\;\; |p|, |q|, |s|, |t| > N^\beta.  
\end{split}\ee
Furthermore, the conditions $(p+q-r,r), (s+t-r,r)\in (\PL^2)^c$ imply that at least one of  $p, q, p+q-r$ and one of $ s,t, s+t-r$ is greater than $N^\alpha /3$: we consider the cases
\be\label{eq:cases2}\begin{split}
    &(a)\;\;|p+q-r|, |s+t-r|> N^\alpha/3,  \\
    &(c)\;\; |p+q-r|, |s|> N^\alpha/3 ,
\end{split}
\hspace{1cm}
\begin{split}
     &(b)\;\;|p|, |s+t-r|> N^\alpha/3,\\
    & (d)\;\;|p|, |s|> N^\alpha/3. 
\end{split}\ee
Now, using symmetries among and within the pairs $(p,q)\in\PL^2$ and $(s,t)\in\PL^2$, one readily sees that for N large enough, such that $N^\beta < N^\alpha/3$, we have that
        \[\begin{split}
           & \sum_{\substack{r\in\Lambda^*_+, p,q, s, t\in\Lambda^*:\\ (p+q-r,r),(s+t-r,r)\in (\PL^2)^c,\\ (p,q),(s,t)\in\PL^2 }} \!\!\!   |\langle \xi, a_p^* a_q^*  a_{s+t-r}^* a_{p+q-r}a_{s}a_{t}\xi\rangle|\\
           &\;\leq C \bigg( \sum_{j=1}^6\Sigma_{ja}(\xi) + \sum_{j=2}^6\Sigma_{jb}(\xi) + \Sigma_{5c}(\xi) +\sum_{j=4}^6\Sigma_{jd}(\xi)\bigg),
        \end{split}\]
where $ \Sigma_{j\alpha}$, for $j\in \{1,\ldots, 6\}$ and $ \alpha\in\{a,b,c,d\}$, refers to the contribution 
        \[\Sigma_{j\alpha}(\cdot):=\sum_{\substack{r\in\Lambda^*_+, p,q, s, t\in\Lambda^*: p,q,s,t,\\ p+q-r,s+t-r \text{ satisfy } j) \text{ and } \alpha) }} \!\!\!   |\langle \cdot, a_p^* a_q^*  a_{s+t-r}^* a_{p+q-r}a_{s}a_{t}\cdot\rangle|\geq 0. \]
Here, the restriction labels $j\in \{1,\ldots, 6\}$ and  $\alpha\in\{a,b,c,d\}$ refer to \eqref{eq:cases1} and \eqref{eq:cases2}, respectively. Applying basic Cauchy-Schwarz estimates as in Lemmas \ref{lm:K} and \ref{lm:Vren}, we find
        \[\begin{split}
        \Sigma_{1a} &\leq C N^{4\beta+1} (\cK + N) (\cN_{>N^{\alpha}/3}+1),\\
        \Sigma_{2a},\Sigma_{3a}&\leq CN^{2\alpha + 2\beta +\frac12} (\cK + N) (\cN_{>N^{\alpha}/3}+1) (\cN_{>N^{\beta}}+1)^{\frac12},\\
        \Sigma_{2b} &\leq CN^{\frac32\alpha+\frac52\beta+1} (\cK + N) (\cN_{>N^{\alpha}/3}+1),\\
        \Sigma_{3b} &\leq C\big( N^{\frac32\alpha + \frac52\beta +\frac12}  (\cN_{>N^{\beta}}+N^{3\alpha})^{\frac12}  + N^{3\alpha+\beta}   (\cN_{>N^{\beta}}+N^{3\alpha})\big)\\
        &\hspace{0.5cm}\times(\cK + N + N^{5\beta}) (\cN_{>N^{\alpha}/3}+1),\\
        \Sigma_{4a},\Sigma_{5a},\Sigma_{6a}  &\leq C N^{4\alpha} (\cK + N) (\cN_{>N^{\alpha}/3}+1) (\cN_{>N^{\beta}}+1),\\
        \Sigma_{4b},\Sigma_{5b},\Sigma_{6b} 
        &\leq C\big( N^{\frac52\alpha + \frac32\beta +\frac12}  (\cN_{>N^{\beta}}+1)^{\frac12}  + N^{4\alpha}   (\cN_{>N^{\beta}}+1)\big) (\cK + N) (\cN_{>N^{\alpha}/3}+1), \\
        \Sigma_{5c} &\leq  C\big( N^{2\alpha + 2\beta +\frac12}   (\cN_{>N^{\beta}}+1)^{\frac12} +N^{\frac72\alpha +\frac12\beta}  (\cN_{>N^{\beta}}+1)\big) (\cK + N) (\cN_{>N^{\alpha}/3}+1), \\
        \Sigma_{4d}, \Sigma_{5d}, \Sigma_{6d} &\leq C\big( N^{\frac52\alpha + \frac32\beta +\frac12}   (\cN_{>N^{\beta}}+1)^{\frac12} +N^{4\alpha}   (\cN_{>N^{\beta}}+1) +N^{\alpha + 3\beta+1}\big) \\
        &\hspace{0.5cm}\times(\cK + N)(\cN_{>N^{\alpha}/3}+1)  .
        \end{split}\]
Here, an inequality of the form $ \Sigma_{j\alpha} \leq \cL $ for a non-negative self-adjoint operator $\cL $ refers to the statement that $ \Sigma_{j\alpha}(\xi)\leq \langle\xi,\cL\,\xi\rangle $, for all $\xi\in L^2_s(\Lambda^N)$. 
In order to illustrate more explicitly how to bound the above terms, consider for example $\Sigma_{1a}$: here we bound 
\[
\begin{split}
    \Sigma_{1a}
    &\leq \sum_{\substack{r\in\Lambda^*_+, p,q, s, t\in\Lambda^*:\\ |p|,|q|,|s|,|t|\leq N^\beta\\
    |p+q-r|, |s+t-r|> N^\alpha/3}} \!\!\!   \| a_p a_q  a_{s+t-r}\cdot\| \|a_{p+q-r}a_{s}a_{t}\cdot\|\\
    &\leq \Big(\sum_{\substack{r\in\Lambda^*_+, p,q, s, t\in\Lambda^*:\\ |p|,|q|,|s|,|t|\leq N^\beta\\
    |p+q-r|, |s+t-r|> N^\alpha/3}} \!\!\!  
    \Big(\frac{|p|+1}{|s|+1}\Big)^2 \| a_p a_q  a_{s+t-r}\cdot\|^2 \Big)^{1/2}\\
    &\quad\times\Big(\sum_{\substack{r\in\Lambda^*_+, p,q, s, t\in\Lambda^*:\\ |p|,|q|,|s|,|t|\leq N^\beta\\
    |p+q-r|, |s+t-r|> N^\alpha/3}} \!\!\!  
    \Big(\frac{|s|+1}{|p|+1}\Big)^2\|a_{p+q-r}a_{s}a_{t}\cdot\|^2\Big)^{1/2}\\
    &\leq CN^{4\beta+1} (\cK + N) (\cN_{>N^{\alpha}/3}+1).
\end{split}
\]
The remaining contributions can be controlled in the same way, except the term $\Sigma_{3b}$: in this case, all momenta appearing in the creation operators are high and in order to efficiently use the kinetic energy, we bound this term in a more involved way by


\[
\begin{split}
    \Sigma_{3b}
    &\leq \!\!\!\sum_{\substack{r\in\Lambda^*_+, p,q, s, t\in\Lambda^*: p,q,s,t,\\ |q| > N^\beta, |s|, |t| \leq N^\beta,\\ |p|, |s+t-r|> N^\alpha/3}} \!\!\!   
    \frac{|s|+1}{|t|+1}\|(\cN_s+1)^{\frac12} a_p a_{s+t-r}\cdot\|
    \frac{|t|+1}{|s|+1}\|(\cN_q+1)^{\frac12} a_{p+q-r}a_{t}\cdot\|\\
    &\leq \Big(\!\!\!\sum_{\substack{r\in\Lambda^*_+, p,q, s, t\in\Lambda^*: p,q,s,t,\\ |q| > N^\beta, |p+q-r|, |s|, |t| \leq N^\beta,\\ |p|, |s+t-r|> N^\alpha/3}} \!\!\!   
    \Big(\frac{|s|+1}{|t|+1}\Big)^2
    \Big(\|\cN_s^{\frac12} a_p a_{s+t-r}\cdot\|^2 
    +\| a_p a_{s+t-r}\cdot\|^2\Big)
    \Big)^{1/2}\\
    &\quad\times \Big(\!\!\!\sum_{\substack{r\in\Lambda^*_+, p,q, s, t\in\Lambda^*: p,q,s,t,\\ |q| > N^\beta, |p+q-r|, |s|, |t| \leq N^\beta,\\ |p|, |s+t-r|> N^\alpha/3}} \!\!\!   
    \Big(\frac{|t|+1}{|s|+1}\Big)^2
    \Big(\|\cN_q^{\frac12} a_{p+q-r}a_{t}\cdot\|^2
    +\| a_{p+q-r}a_{t}\cdot\|^2
    \Big)
    \Big)^{1/2}\\
    &+\Big(\!\!\!\sum_{\substack{r\in\Lambda^*_+, p,q, s, t\in\Lambda^*: p,q,s,t,\\ |q|, |p+q-r| > N^\beta, |s|, |t| \leq N^\beta,\\ |p|, |s+t-r|> N^\alpha/3}} \!\!\!   
    \Big(\frac{|s|+1}{|t|+1}\Big)^2
    \Big(\|\cN_s^{\frac12} a_p a_{s+t-r}\cdot\|^2 
    +\| a_p a_{s+t-r}\cdot\|^2\Big)
    \Big)^{1/2}\\
    &\quad\times \Big(\!\!\!\sum_{\substack{r\in\Lambda^*_+, p,q, s, t\in\Lambda^*: p,q,s,t,\\ |q|,|p+q-r| > N^\beta, |s|, |t| \leq N^\beta,\\ |p|, |s+t-r|> N^\alpha/3}} \!\!\!   
    \Big(\frac{|t|+1}{|s|+1}\Big)^2
    \Big(\|\cN_q^{\frac12} a_{p+q-r}a_{t}\cdot\|^2
    +\| a_{p+q-r}a_{t}\cdot\|^2
    \Big)
    \Big)^{1/2}\\
    &\leq
    \big( N^{\frac32\alpha + \frac52\beta +\frac12}  (\cN_{>N^{\beta}}+N^{3\alpha})^{\frac12} \! +\! N^{3\alpha+\beta}   (\cN_{>N^{\beta}}+N^{3\alpha})\big) (\cK + \!N \!+\! N^{5\beta}) (\cN_{>N^{\alpha}/3}\!+\!1),
\end{split}
\]
where we set $\cN_s:=a_{s}^* a_{s}$.

Collecting the above estimates and multiplying by a factor $N^{2\kappa-2\alpha-2}$, we arrive at  
        \[\begin{split}
        &N^{2\kappa-2\alpha-2} \!\!\!\!\!\!\!\!\!\sum_{\substack{r\in\Lambda^*_+, p,q, s, t\in\Lambda^*:\\ (p+q-r,r),(s+t-r,r)\in (\PL^2)^c,\\ (p,q),(s,t)\in\PL^2 }} \!\!\!   |\langle \xi, a_p^* a_q^*  a_{s+t-r}^* a_{p+q-r}a_{s}a_{t}\xi\rangle|\\
        &\leq CN^{2\kappa-2\alpha-2}
        \langle\xi,\big(N^{4\alpha} (\cN_{>N^{\beta}}+N^{3\alpha})
        + N^{\frac52\alpha + \frac32\beta +\frac12} (\cN_{>N^{\beta}}+N^{3\alpha})^{\frac12}
        +N^{\frac32\alpha+\frac52\beta+1}\big)\\
        &\hspace{1cm}\times(\cK + N + N^{5\beta}) (\cN_{>N^{\alpha}/3}+1)\xi\rangle. \qedhere
        \end{split}\]
\end{proof}
Before concluding Theorem \ref{thm:main}, the last ingredient that we need is some mild a priori information on the energy of the ground state vector $\psi_N$, as remarked around Eq. \eqref{eq:markov}.
\begin{lemma}\label{lm:apriori}
    Let $ \psi_N$ denote the normalized ground state vector of $H_N$, defined in \eqref{eq:HN}, and let $  \beta \geq 0$. Then $ \psi_N$ satisfies the a priori bounds 
           \[\begin{split}    
           \langle \psi_N, \cN_{> N^{\beta}}\psi_N\rangle &\leq C  N^{1+\kappa-2\beta}, \\
           N^{-1}\langle \psi_N, \cK \cN_{> N^{\beta}}\psi_N\rangle &\leq C N^{1+2\kappa-2\beta} + C N^{\frac32\kappa+\frac12\beta}, \\
           N^{-2}\langle \psi_N, \cK \cN^2_{> N^{\beta}}\psi_N\rangle &\leq C N^{1+3\kappa-4\beta} + CN^{\beta+2\kappa} +CN^{\frac52\kappa-\frac32\beta}.
           \end{split}\]   
\end{lemma}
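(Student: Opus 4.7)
The first bound is immediate from \eqref{eq:markov} applied to the ground state, combined with the trivial upper bound $E_N\leq CN^{1+\kappa}$: $\langle\psi_N,\cN_{>N^{\beta}}\psi_N\rangle\leq N^{-2\beta}\langle\psi_N,H_N\psi_N\rangle=N^{-2\beta}E_N\leq CN^{1+\kappa-2\beta}$.

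For the second and third bounds, my starting point is the identity $\langle\psi_N, H_N X\psi_N\rangle=E_N\langle\psi_N, X\psi_N\rangle$, valid for any bounded operator $X$ since $\psi_N$ is an eigenvector of $H_N$ with eigenvalue $E_N$. Applying this with $X=\cN_{>N^{\beta}}^k$ and writing $H_N=\cK+\cV_N$ yields
\[
\langle\cK\cN_{>N^{\beta}}^k\rangle \;=\; E_N\langle\cN_{>N^{\beta}}^k\rangle \;-\; \langle\cV_N\cN_{>N^{\beta}}^k\rangle, \qquad k=1,2,
\]
where all expectations are with respect to $\psi_N$. For $k=1$, the leading term satisfies $E_N\langle\cN_{>N^{\beta}}\rangle\leq CN^{2+2\kappa-2\beta}$, producing after division by $N$ the first summand $CN^{1+2\kappa-2\beta}$ of the second bound. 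For the third bound I would estimate $\langle\cN_{>N^{\beta}}^2\rangle$ by iterating \eqref{eq:markov} in the form $\cN_{>N^{\beta}}^2\leq N^{-2\beta}\cK\cN_{>N^{\beta}}$ (a valid operator inequality since $\cK$ and $\cN_{>N^\beta}$ commute) and inserting the just-established second bound; this gives $E_N\langle\cN_{>N^{\beta}}^2\rangle\leq CN^{3+3\kappa-4\beta}+CN^{2+5\kappa/2-3\beta/2}$, matching after division by $N^2$ the first and third summands of the third bound.

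It remains to control $|\langle\cV_N\cN_{>N^{\beta}}^k\rangle|$ for $k=1,2$, with target bounds $CN^{1+3\kappa/2+\beta/2}$ and $CN^{2+2\kappa+\beta}$ respectively (so that after dividing by $N$ and $N^2$ they produce the remaining summands $CN^{3\kappa/2+\beta/2}$ and $CN^{\beta+2\kappa}$). I would do this directly in second quantization, following the strategy of Lemma \ref{lm:RN}: insert the explicit expression for $\cV_N$ and normal-order $\cV_N\cN_{>N^{\beta}}^k$ against the factors $a^*_sa_s$, obtaining a fully normal-ordered correlator of $2(k+2)$ creation and annihilation operators together with lower-order boundary terms from the commutators of $a_pa_q$ past $a^*_s$. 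Each correlator is controlled by a Cauchy-Schwarz estimate $|\langle a^{*}\cdots a\cdots\rangle|\leq\|a\cdots\psi_N\|\,\|a\cdots\psi_N\|$ with weights of the form $(|k_i|+1)/(|k_j|+1)$ chosen to extract factors of $\cK$ on one side and of $\cN_{>N^{\beta}}$ on the other, and the resulting sums are split according to whether each free momentum lies above or below $N^{\beta}$, using $\|\hat V\|_{\infty}\leq\|V\|_{L^1}$ for the $r$-summation and the first bound on $\langle\cN_{>N^{\beta}}\rangle$ for the combinatorial factors.

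I expect the main technical obstacle to be the combinatorial bookkeeping of the case analysis for the resulting 6- and 8-operator correlators. Structurally, however, this is a simpler version of the decomposition already carried out in the proof of Lemma \ref{lm:RN}: no $\eta$-factor appears and no momenta are constrained to lie in $(\PL^2)^c$, so the same weight choices and symmetries apply and the arguments are more direct.
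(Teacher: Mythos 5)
Your first bound, the iteration $\cN_{>N^\beta}^2\leq N^{-2\beta}\cK\cN_{>N^\beta}$, and the overall algebraic structure are fine. But the central step of your plan — a direct Cauchy--Schwarz estimate on $|\langle\psi_N,\cV_N\cN_{>N^\beta}^k\psi_N\rangle|$ — does not work, and this is precisely where the paper's argument has a key idea that you are missing. After normal-ordering, $\cV_N\cN_{>N^\beta}=\sum_{|s|>N^\beta}a_s^*\cV_N a_s+\text{(contraction terms)}$, so $\langle\cV_N\cN_{>N^\beta}\rangle$ contains the \emph{non-negative} piece $\sum_{|s|>N^\beta}\|\cV_N^{1/2}a_s\psi_N\|^2$. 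There is no a priori smallness here: this positive piece is generically of order $\langle\cV_N\rangle\langle\cN_{>N^\beta}\rangle\sim N^{2+2\kappa-2\beta}$, already the same size as the ``main'' term $E_N\langle\cN_{>N^\beta}\rangle$ you want to keep. Your stated target $|\langle\cV_N\cN_{>N^\beta}\rangle|\leq CN^{1+3\kappa/2+\beta/2}$ is therefore unreachable by estimating the normal-ordered pieces separately. Moreover, your appeal to ``$\|\hat V\|_\infty\leq\|V\|_{L^1}$ for the $r$-summation'' is not enough, because in $\cV_N\cN_{>N^\beta}^k$ (unlike $R_N$) no momentum is confined to a bounded set such as $\PL$, so the free sum over $r$ simply diverges without additional structure.

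The structure that rescues the estimate in the paper is the \emph{two-sided} sandwich. Instead of $\langle H_N\cN_{>N^\beta}\rangle=E_N\langle\cN_{>N^\beta}\rangle$, one first applies Cauchy--Schwarz
\[
\langle\cK\cN_{>N^\beta}\rangle\le\langle\cK\rangle^{1/2}\langle\cN_{>N^\beta}\cK\cN_{>N^\beta}\rangle^{1/2},
\]
then bounds $\cK\le H_N$ inside and uses the eigenvector property in the form
\[
\langle\cN_{>N^\beta}H_N\cN_{>N^\beta}\rangle
= E_N\langle\cN^2_{>N^\beta}\rangle+\langle\cN_{>N^\beta}[H_N,\cN_{>N^\beta}]\rangle .
\]
The potential-energy contribution then appears only through the commutator $[H_N,\cN_{>N^\beta}]=[\cN_{\le N^\beta},\cV_N]$, which has one \emph{creation operator at low momentum} $|p|\le N^\beta$, i.e.\ a sum over only $\sim N^{3\beta}$ points. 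That built-in momentum restriction is exactly what your contraction terms lack: there the extra momentum leg ranges over $|m|>N^\beta$, an infinite set, and the pointwise operator $\check b_x=\sum_{|m|>N^\beta}e^{imx}a_m$ has no good $x$-uniform bound. Note also that in your one-sided version the decomposition $\langle\cV_N\cN_{>N^\beta}\rangle=\langle\cN_{>N^\beta}\cV_N\rangle+\langle[\cV_N,\cN_{>N^\beta}]\rangle$ is useless: the commutator expectation vanishes identically because the expectation is real. Only the extra $\cN_{>N^\beta}$ factor on the left in the paper's version produces a \emph{nonzero, small} commutator term that can actually be estimated. To close the argument you should follow the paper's route: Cauchy--Schwarz first, then the commutator identity, then absorb the commutator term via Young's inequality and solve the resulting self-consistent inequality.
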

\begin{proof}  The first bound is a direct consequence of \eqref{eq:markov} and the fact that $ E_N \leq CN^{1+\kappa}$. For the bound on $\cK \cN_{\geq N^{\beta}}$, we use a commutator argument as in \cite{BBCS2, BBCS3, ABS}. We bound
         \[\begin{split}
        N^{-1}\langle \psi_N, \cK \cN_{> N^{\beta}}\psi_N\rangle & \leq    N^{-1}\langle \psi_N, \cK \psi_N\rangle^{\frac12}\langle \psi_N, \cN_{> N^{\beta}} \cK \cN_{> N^{\beta}}\psi_N\rangle^{\frac12}\\
        &\leq C N^{\frac{\kappa}2-\frac12} \langle \psi_N, \cN_{> N^{\beta}} \cK \cN_{> N^{\beta}}\psi_N\rangle^{\frac12}
        \end{split}\]
and then        
        \[\begin{split}
        \frac1N\langle \psi_N, \cN_{> N^{\beta}}\cK \cN_{> N^{\beta}}\psi_N\rangle & \leq  \frac1N\langle \psi_N,   \cN_{> N^{\beta}}  H_N \cN_{> N^{\beta}}\psi_N\rangle \\
        & = \frac{E_N}N \langle   \psi_N,  \cN^2_{> N^{\beta}}\psi_N\rangle + \frac1N\langle \psi_N,   \cN_{> N^{\beta}}\big[ H_N,  \cN_{> N^{\beta}}\big]\psi_N\rangle \\
        & \leq CN^{\kappa-2\beta}\langle \psi_N, \cK \cN_{> N^{\beta}}\psi_N\rangle+ \frac1N\langle \psi_N,   \cN_{> N^{\beta}}\big[ H_N,  \cN_{> N^{\beta}}\big]\psi_N\rangle.
        \end{split}\]     
To estimate the commutator contribution on the \rhs in the previous equation, we write 
        \[H_N -\cK=   \frac12\int_{\Lambda^2}dxdy\, N^{2-2\kappa}V(N^{1-\kappa}(x-y))\check{a}_x^*\check{a}_y^*\check{a}_x\check{a}_y=: \cV_N,\]
where $ \check{a}_x :=\sum_{p\in\Lambda^*} e^{ipx}a_p $ denotes the usual operator valued distribution annihilating a particle at $x\in \Lambda$, and we note $ [\cK, \cN_{> N^{\beta}}]=0$ as well as $[\cV_N, \cN_{> N^{\beta}}]=[\cN_{\leq  N^{\beta}}, \cV_N] $ with
        \[\begin{split} 
        \big[ \cN_{\leq  N^{\beta}}, \cV_N\big] = \sum_{p\in\Lambda^*: |p|\leq N^{\beta}} \int_{\Lambda^2}dxdy\, N^{2-2\kappa}V(N^{1-\kappa}(x-y))e^{ipx}\check{a}_p^*\check{a}_y^*\check{a}_x\check{a}_y + \hc
        \end{split}\]
Now, basic Cauchy-Schwarz estimates imply that 
        \[\begin{split}
             &N^{-1}\big|\big\langle \psi_N,   \cN_{> N^{\beta}}\big[ \cN_{\leq  N^{\beta}}, \cV_N\big]\psi_N\big\rangle\big|\\
             &\;\leq CN^{-1}\sum_{p\in\Lambda^*: |p|\leq N^{\beta}} \int_{\Lambda^2}dxdy\,N^{2-2\kappa}V(N^{1-\kappa}(x-y)) \|a_p \check{a}_y \,\cN_{> N^{\beta}} \psi_N\| \|\check{a}_x\check{a}_y\psi_N\| \\
             &\hspace{0.4cm}+CN^{-1}\sum_{p\in\Lambda^*: |p|\leq N^{\beta}} \int_{\Lambda^2}dxdy\,N^{2-2\kappa}V(N^{1-\kappa}(x-y)) \|a_p \check{a}_y  \psi_N\| \| \check{a}_x\check{a}_y \,\cN_{> N^{\beta}}\psi_N\|\\
             &\leq C N^{\frac\beta2+\frac\kappa2-1}\big( \|  (\cK+a^*_0a_0)^{\frac12}\cN_{> N^{\beta}}\psi_N \| \|\cV_N^{1/2}\psi_N\| +   \|  (\cK+a^*_0a_0)^{\frac12}\psi_N \| \|\cV_N^{1/2}\cN_{> N^{\beta}}\psi_N\|\big) \\
             &\leq C N^{\frac\beta2+ \kappa-\frac12}   \|  H_N^{\frac12}\cN_{> N^{\beta}}\psi_N \|+C N^{\frac\beta2+ \kappa}\|\cN_{> N^{\beta}}\psi_N\|.
        \end{split}\]
Combining the previous estimates with $ab\leq \frac{a^2}2+\frac{b^2}2$, we conclude   
        \[\begin{split}
        & \frac1N\langle \psi_N,   \cN_{> N^{\beta}}  H_N \cN_{> N^{\beta}}\psi_N\rangle  \\
         &\leq C N^{\kappa-2\beta}\langle \psi_N, \cK \cN_{> N^{\beta}}\psi_N\rangle + C N^{\frac\beta2+ \kappa-\frac12}   \|  H_N^{\frac12}\cN_{> N^{\beta}}\psi_N \|+C N^{\frac\beta2+ \kappa}\|\cN_{> N^{\beta}}\psi_N\|\\
         &\leq C N^{\kappa-2\beta}\langle \psi_N, \cK \cN_{> N^{\beta}}\psi_N\rangle + CN^{\beta+2\kappa} + \frac1{2N}\langle \psi_N,   \cN_{> N^{\beta}}  H_N \cN_{> N^{\beta}}\psi_N\rangle
        \end{split}\]
and therefore  
        \[\frac1N\langle \psi_N,   \cN_{> N^{\beta}}  H_N \cN_{> N^{\beta}}\psi_N\rangle\leq C N^{\kappa-2\beta}\langle \psi_N, \cK \cN_{> N^{\beta}}\psi_N\rangle+ CN^{\beta+2\kappa}.   \]       
As a consequence, we obtain that 
        \[\begin{split}
        N^{-1}\langle \psi_N, \cK \cN_{> N^{\beta}}\psi_N\rangle & \leq C N^{1+2\kappa-2\beta} + C N^{\frac32\kappa+\frac12\beta} , \\
         N^{-2}\langle \psi_N,   \cN_{> N^{\beta}}  \cK  \cN_{> N^{\beta}}\psi_N\rangle&\leq  C N^{1+3\kappa-4\beta} + CN^{\beta+2\kappa} +CN^{\frac52\kappa-\frac32\beta}. \qedhere
        \end{split}\]
\end{proof}

We are now ready to prove our main result. 
\begin{proof}[Proof of Theorem \ref{thm:main}]
Let $ \psi_N$ denote the normalized ground state vector of $ H_N$, given some parameter $ \kappa\in [0,\frac1{20})$. Let $\PL $ be defined as in \eqref{eq:defPL} and choose  
        $$\alpha := (1+\epsilon)\frac{41}{10}\kappa$$ 
for some sufficiently small $\epsilon>0$; in particular $ \alpha\in [0,1-\kappa]$. Now, by \eqref{eq:lwrbnd}, we have that 
        \[\begin{split}  \langle\psi_N, H_N\psi_N\rangle &\geq \sum_{r\in \Lambda_+^*}|r|^2 \langle \psi_N, c^*_r c_r \psi_N\rangle  - \langle \psi_N, R_N\psi_N\rangle \\
        &\hspace{0.4cm} +\frac{N^{\kappa}}{2N}\!\!\!\sum_{\substack{p,q,r\in\Lambda^*: \\ p,q, p+r, q-r \in \PL}}\!\!\!\!\! \langle \ph_{p+r}\otimes \ph_{q-r}, \vr \ph_p\otimes \ph_q  \rangle \langle \psi_N,  a^*_{p+r}a^*_{q-r}a_pa_q\psi_N\rangle
        \end{split}\] 
and our goal is to estimate the terms on the right hand side. We start with the kinetic energy term. Combining the bounds from Lemmas \ref{lm:K} and \ref{lm:apriori}, we find that
        \[\begin{split}
         &\sum_{r\in \Lambda_+^*}|r|^2 \langle \psi_N, c^*_r c_r \psi_N\rangle -4\pi^2 \langle\psi_N, \cN_+   \psi_N\rangle \\
         &\;\geq
         - C N^{1+\kappa-2 \delta}
         - CN^{1+3\kappa+\frac{1}{2}\delta-\frac{7}{2}\alpha} 
         - CN^{\frac{5}{2}\kappa+\frac{1}{2}\delta-\alpha} = o(N), 
        \end{split}\]
where we used \eqref{eq:markov}, the choice $\frac\kappa2 < \delta < \alpha$ and the identity $ \cN_{< N^{\delta}}- a^*_0a_0 = \cN_+-\cN_{\geq N^{\delta}}$.

Proceeding similarly for the remaining error terms, we obtain from Lemma \ref{lm:Vren} that 
        \[\begin{split} 
        &\frac{N^{\kappa}}{2N}\!\!\!\sum_{\substack{p,q,r\in\Lambda^*: \\ p,q, p+r, q-r \in \PL}}\!\!\!\!\! \langle \ph_{p+r}\otimes \ph_{q-r}, \vr \ph_p\otimes \ph_q  \rangle \langle \psi_N,  a^*_{p+r}a^*_{q-r}a_pa_q\psi_N\rangle \\
        &\;\geq 4\pi\mathfrak{a}N^{1+\kappa} - C  N^{1+2\kappa-2\alpha}  
        - CN^{\kappa+3\alpha} =4\pi\mathfrak{a}N^{1+\kappa}+o(N)
        \end{split}\]
and from Lemma \ref{lm:RN}, assuming $ \beta= (1+\epsilon)\frac52\kappa  $ for sufficiently small $\epsilon>0$, that 
        \[\begin{split} 
        |\langle \psi_N, R_N\psi_N\rangle| &\leq C N^{1+5\kappa-2\beta } + C N^{\frac12+\frac92\kappa+\frac52\alpha-2\beta } + C N^{1+\frac92\kappa+\frac12\beta-\frac32\alpha }\\
        &\hspace{0.5cm}+C N^{\frac12+4\kappa+\alpha+\frac12\beta }+C N^{1+4\kappa+\frac52\beta-\frac52\alpha } \\
        & = o(N) + C N^{\frac12+\frac52\alpha-\frac12\kappa+O(\epsilon) } + C N^{\frac12+\frac{21}4\kappa+\alpha+O(\epsilon)  } = o(N).
        \end{split}\]
Combining this with the ground state energy upper bound $ E_N\leq  4\pi \mathfrak{a}N^{1+\kappa}+o(N)$, as pointed out in the second remark after Theorem \ref{thm:main}, we get
        \[\begin{split}
           4\pi \mathfrak{a}N^{1+\kappa}+o(N) & \geq  \langle\psi_N, H_N\psi_N\rangle \geq 4\pi \mathfrak{a}N^{1+\kappa} + 4\pi^2 \langle\psi_N, \cN_+\psi_N\rangle +o(N)
        \end{split}\]
and thus conclude that
        \[\lim_{N\to\infty }N^{-1} \langle\psi_N, \cN_+\psi_N\rangle =\lim_{N\to\infty } \big(1- \langle\ph_0, \gamma_N^{(1)}\ph_0\rangle\big)=0. \qedhere \]         
\end{proof}

\vspace{0.2cm}
\noindent\textbf{Acknowledgements.} C.\ B.\ acknowledges support by the Deutsche Forschungsgemeinschaft (DFG, German Research Foundation) under Germany’s Excellence Strategy – GZ 2047/1, Projekt-ID 390685813. M.B., C.C and J.O. acknowledge partial support from the Swiss National Science Foundation through the Grant “Dynamical and energetic properties of Bose- Einstein condensates”, from the European Research Council through the ERC-AdG CLaQS, grant agreement n. 834782, and from the NCCR SwissMAP. C.C. acknowledges the GNFM Gruppo Nazionale per la Fisica Matematica - INDAM.


\appendix 

\section{Proof of the Bounds (\ref{eq:bndvr}) and (\ref{eq:bndeta})}\label{appx}
\begin{proof}[Proof of the Bounds (\ref{eq:bndvr}) and (\ref{eq:bndeta})] Throughout this appendix, we assume $p,q,s,t\in \Lambda^* $ and we abbreviate $\langle  \cL   \rangle_{pq, st}:=\langle\ph_{p}\otimes \ph_{q}, \cL \,\ph_{s}\otimes \ph_{t} \rangle  $
for every operator $ \cL $ on $L^2(\Lambda^2)$. Let us begin with a few elementary observations: it is clear that the operator $ N^{1-\kappa}V_N$ preserves the total momentum and that  
        \[\langle   N^{1-\kappa}V_N  \rangle_{pq,st}\leq \delta_{p+q, s+t}\int_{\Lambda^2} dx_1dx_2\, N^{3-3\kappa} V(N^{1-\kappa}(x_1-x_2))\leq  \delta_{p+q, s+t}\|V\|_1. \]
Combining this with the fact that $ -\Delta_{x_1}-\Delta_{x_2}+V_N$ and hence its pseudo-inverse 
        \[ \cR:=\pih \big[ \pih( -\Delta_{x_1}-\Delta_{x_2}+V_N)\pih \big]^{-1} \pih \]
from $ \pih L^{2}(\Lambda^2)$ to $\pih L^{2}(\Lambda^2)$ also preserve the total momentum, we get that 
        \[\begin{split}
            \langle  \vr \rangle_{pq,st} &\leq C\delta_{pq,st}\sup_{p,q\in\Lambda^*}  \Big( 1+N^{1-\kappa}\big\langle V_N^{1/2} \pih V_N^{1/2}  \cR V_N^{1/2}\pih  V_N^{1/2} \big\rangle_{pq,pq} \\
            &\hspace{3cm} + N^{1-\kappa}\big\langle V_N^{1/2} \pil V_N^{1/2}  \cR V_N^{1/2}\pil  V_N^{1/2} \big\rangle_{pq,pq}\Big).
        \end{split} \]
To control the right hand side, we make use of the operator inequalities 
        \[ 0\leq \pih V_N^{1/2}  \cR V_N^{1/2}\pih \leq 1 \;\text{ and }\; \cR \leq N^{-2\alpha}\pih \leq N^{-2\alpha}.  \]
This implies on the one hand that 
        \[ N^{1-\kappa}\big\langle V_N^{1/2} \pih V_N^{1/2}  \cR V_N^{1/2}\pih  V_N^{1/2} \big\rangle_{pq,pq}\leq  \langle N^{1-\kappa}V_N  \rangle_{pq,pq} = \|V\|_1 \]
and on the other hand that 
        \[ N^{1-\kappa}\big\langle V_N^{1/2} \pil V_N^{1/2}  \cR V_N^{1/2}\pil  V_N^{1/2} \big\rangle_{pq,pq} \leq  N^{-2\alpha} \|V\|_1 \| \pil  V_N^{1/2}\ph_p\otimes \ph_q \|^2_{\infty}.\]
Looking at the Fourier expansion
        \[  \pil  V_N^{1/2}\ph_p\otimes \ph_q = \sum_{s\in \PL:p+q-s\in\PL} N^{2\kappa-2} \widehat{V^{1/2}}((s-p)/N^{1-\kappa})\ph_{s}\otimes \ph_{p+q-s}, \]
the assumptions that $V\in L^1(\bR^3)$ having compact support and that $\alpha \leq 1-\kappa$ imply that $ \| \pil  V_N^{1/2}\ph_p\otimes \ph_q \|_{\infty}\leq CN^{\alpha}$ so that altogether $ |\langle   \vr \rangle_{pq,st}|\leq C \delta_{pq,st}$. 

Next, let us switch to the second bound in \eqref{eq:bndvr}. We first show that 
        \be\label{eq:app1} |\langle \vr\rangle_{00,00}-8\pi \mathfrak{a}|\leq CN^{\alpha+\kappa-1}.  \ee
Up to minor modifications, this bound follows as in \cite[Appendix A]{Br}, so let us focus on the key steps. Denote by $ f$ the zero energy scattering solution in $\bR^3$ such that  
        \[ (-2\Delta + V)f=0 \]
with $\lim_{|x|\to\infty } f(x) = 1$. It is well known (see \eg \cite[Appendix C]{LSSY}) that $ 0\leq f\leq 1$, that $f$ is radial and that for $x\in\bR^3$ outside the support of $V$, we have that $ f(x) = 1-\mathfrak{a} /|x|. $ Moreover, a basic integration by parts shows that 
        \[ 8\pi \mathfrak{a} = \int_{\bR^3} dx\, V(x)f(x) = \int_{\Lambda}dx\, N^{3-3\kappa}(Vf)(N^{1-\kappa}x) . \]
Let us denote $w:= 1- f$ which is easily seen to satisfy the bounds 
        $$ w(x) \leq \frac{C}{|x|}, \hspace{0.5cm} |\widehat w (p)|\leq \frac{C}{1+|p|^2} $$ 
for some constant $C>0$ (\eg based on the identity $ w = (-2\Delta)^{-1} Vf $). Moreover, pick a smooth bump function $ \chi \in C^{\infty}_c( B_{1/2}(0))$ such that $ \chi(x) =1$ if $ |x|\leq \frac14$ and define 
        \[(x_1,x_2) \mapsto \phi_N(x_1-x_2):= \chi(x_1-x_2)w(N^{1-\kappa}(x_1-x_2)) \in L^2(\Lambda^2).\]
By slight abuse of notation, we identify $\phi_N$ with the associated multiplication operator in $ L^2(\Lambda^2)$. As explained in \cite{Br}, we then have the identity 
        \[ \cR V_N \ph_0\otimes\ph_0
        = \phi_N \ph_0\otimes\ph_0
        + \cR \zeta_N \ph_0\otimes\ph_0
        - (1-\cR V_N)\pil \phi_N\ph_0\otimes\ph_0,  \]
where $ \zeta_N(x_1-x_2):=N^{\kappa-1}\zeta (x_1-x_2)$ for
        \[x\mapsto \zeta (x) :=2\mathfrak{a}\frac{ (\Delta\chi)(x)}{|x|} -4\mathfrak{a}\frac{  (\nabla \chi)(x)\cdot  (x)}{|x|^3}\in C^{\infty}_{0}\big(B_{1/2}(0)\cap \overline B^c_{1/4}(0)\big). \]
Using that $ 8\pi\mathfrak{a} = N^{1-\kappa}\langle V_N, (1-\phi_N)\rangle_{00,00}$, this yields
        \[ \langle \vr\rangle_{00,00}= 8\pi \mathfrak{a} + N^{1-\kappa}\langle \cR V_N, \zeta_N\rangle_{00,00}  + N^{1-\kappa}\langle V_N, (1-\cR V_N)\pil \phi_N\rangle_{00,00}.  \]
Now observe that for $ |p|>N^{\alpha}$, we have that
    \be\label{eq:appaux}\begin{split} \langle \cR V_N\rangle_{-pp,00}&= \frac{-\langle V_N\cR V_N\rangle_{-pp,00}}{2|p|^{2}} +\frac{\langle V_N-\pil(1-V_N\cR) V_N\rangle_{-pp,00}}{2|p|^{2}}\\
    &= \frac{\langle V_N(1-\cR V_N)\rangle_{-pp,00}}{2|p|^{2}}\end{split}\ee
and otherwise $\langle \cR V_N\rangle_{-pp,00}=0$ (by definition of $\cR$) \st $ N^{1-\kappa}\langle \cR V_N, \zeta_N\rangle_{00,00}\leq C N^{\kappa-1}$. Similarly, $ |\widehat\phi_N(p)|\leq CN^{\kappa-1}(1+|p|^2)^{-1}$ and Cauchy-Schwarz imply that 
        \[\begin{split}
          &| N^{1-\kappa}\langle V_N, (1-\cR V_N)\pil \phi_N\rangle_{00,00} |\\
          &\;\leq C \|V\|_1\big( \|\pil \phi_N\|_\infty+N^{-\alpha}\|\pil V_N^{1/2}\pil \phi_N\|_\infty\big)\leq C N^{\alpha+\kappa-1}.
        \end{split}  \]
Combining the previous estimates yields \eqref{eq:app1}. 
In fact, using $N^{1-\kappa}\langle V_N(1-\phi_N)\rangle_{p+r\, q-r,pq} = \widehat{Vf}(r/N^{1-\kappa})$ we can also compute $\langle \vr\rangle_{00,00}$ to higher precision and obtain that
\[
\begin{split}
\langle \vr\rangle_{00,00}
= &8\pi\mathfrak{a}
+\frac{N^{\kappa-1}}{2}\sum_{\substack{0\neq s\in\PL}}
\frac{(8\pi\mathfrak{a})^2}{|s|^2}     
+O(N^{\kappa-1})
+O(N^{2\kappa+2\alpha-2}).
\end{split}
\]
We omit the details as the second term is irrelevant for our range of $\kappa$, it only becomes relevant if one wants to consider the Lee-Huang-Yang order.

To get \eqref{eq:bndvr}, we combine \eqref{eq:app1} with two further steps. On the one hand, we have that 
        \be\label{eq:app2} |\langle  \vr \rangle_{pq,st}- \langle  \vr \rangle_{(p+q)0,(s+t)0}|\leq CN^{\kappa-1} \big( |p|+ |q|+|s|+|t|\big)  \ee
whenever $ p+q = s+t $. This bound follows very similarly as the first bound in \eqref{eq:bndvr}: since $ V_N$ is a multiplication operator, \eqref{eq:app2} clearly holds if we replace $ \vr$ by $ N^{1-\kappa}V_N$. Hence, it is enough to prove \eqref{eq:app2} for $ \vr$ replaced by $ N^{1-\kappa}V_N \cR V_N $. In this case, we write 
        \[\begin{split}
            &N^{1-\kappa}\langle V_N \cR V_N\rangle_{pq,st}- N^{1-\kappa}\langle  V_N \cR V_N\rangle_{(p+q)0,(s+t)0}\\
            &\;=  N^{1-\kappa} \langle \ph_p\otimes\ph_q, V_N \cR V_N  ( \ph_{s}\otimes\ph_{t} - \ph_{s+t}\otimes\ph_{0} )  \rangle  \\
            &\hspace{0.4cm} +  N^{1-\kappa}\langle ( \ph_{p}\otimes\ph_{q} - \ph_{p+q}\otimes\ph_{0} ),  V_N \cR V_N   \ph_{s+t}\otimes\ph_{0}   \rangle.
        \end{split}\]
Now, given any pair $ k,l \in \Lambda^*$, a direct computation shows that 
        \[\begin{split}
          N^{1-\kappa} \| V_N^{1/2}( \ph_{k}\otimes\ph_{l} - \ph_{k+l}\otimes\ph_{0} )  \|^2&= 2 \wh V(0) - 2 \wh V(l/N^{1-\kappa})\leq\frac{  C |l|^2 }{N^{2-2\kappa} }.  
        \end{split} \]
Note that the last step follows from a second order Taylor expansion and the fact that $ (\nabla_p \wh V(./N^{1-\kappa} )(0) =  N^{2-2\kappa} \int_{\bR^3} dx\, (-ix)  V(x) =0 $, $V$ being radial. Similarly, we get  
        \[\begin{split}
        & N^{-2\alpha}  \| \pil  V_N^{1/2}( \ph_{k}\otimes\ph_{l} - \ph_{k+l}\otimes\ph_{0} )
        \|^2_{\infty} \\
        &= N^{-2\alpha} \Big\|\sum_{\substack{ s\in \PL:\\k+l-s\in\PL}} \!\!\!\!\!\!N^{2\kappa-2}\big( \widehat{V^{1/2}}((s-k)/N^{1-\kappa}) - \widehat{V^{1/2}}((s-k-l)/N^{1-\kappa})\big) \ph_{s}\otimes \ph_{k+l-s}\Big\|_\infty^2\\
        &\leq C|l|^2/N^{2-2\kappa}. 
        \end{split}\]
Proceeding now as in the proof of the first bound in \eqref{eq:bndvr}, we obtain \eqref{eq:app2}. 

Combining \eqref{eq:app2} with \eqref{eq:app1}, the second bound in \eqref{eq:bndvr} thus follows if we prove that  
        \be\label{eq:app3}   |\langle V_N\cR V_N \rangle_{p0,p0}- \langle  V_N\cR V_N \rangle_{00,00}|\leq CN^{2\kappa-\alpha-2}  |p|^2  \ee
for every $ p\in\Lambda^*_+$. This can be proved similarly as detailed in \cite[Appendix A]{Br}: define 
        $$ -\Delta^{(p)}:= (-i\nabla_{x_1}+p)^2 -\Delta_{x_2}, \hspace{0.5cm} \cR^{(p)}:= \pih^{+} \big[\pih^{+} (-\Delta^{(p)}+V_N)\pih^{+}\big]^{-1}\pih^{+}, $$ 
where the orthogonal projection $\pih^{+}$ maps onto
        \[ \overline{\text{span}\{ \ph_k\otimes\ph_l: (k,l)\in (\PL^2)^c \text{ and }l\neq 0  \} }.\]
Notice that this ensures $\langle \xi, -\Delta^{(p)}\xi\rangle \geq 4\pi^2 $ for every $ \xi\in \pih^{+}L^2(\Lambda^2)$, by construction of the projection $\pih^{+} $. In particular, $\cR^{(p)}$ is well defined. Now, based on the observation 
        \[\langle V_N\cR V_N \rangle_{p0,p0} = \langle V_N e^{-ipx_1}\cR e^{ipx_1} V_N  \rangle_{00,00} \] 
and the fact that $ -\Delta^{(p)} = e^{-ipx_1}(-\Delta_{x_1}-\Delta_{x_2})e^{ipx_1} $ for $ p\in\Lambda^*$, it follows that
        \[\begin{split}
         &\langle V_N e^{-ipx_1}\cR e^{ipx_1} V_N\rangle_{00,00}-\langle V_N\cR^{(p)}V_N\rangle_{00,00} \\
         & = -\langle V_N \cR^{(p)}e^{-ipx_1}\pil(1-V_N\cR)V_N\rangle_{00,p0} +\langle V_N (1-\cR^{(p)}V_N)\pil^{+} e^{-ipx_1}\cR  V_N\rangle_{00,p0}.
        \end{split}\]
Here, we set $\pil^{+}:= 1- \pih^{+}$. Since $V_N \cR^{(p)} $ preserves the total momentum and projects onto a subset of $(\PL^2)^c $, we have that 
        \[\begin{split}
            &\langle V_N \cR^{(p)}e^{-ipx_1}\pil(1-V_N\cR)V_N\rangle_{00,p0} \\ 
            &= \sum_{q\in \PL^c\cap \Lambda^*_+; \,s,t\in \PL} \langle V_N \cR^{(p)}\rangle_{00,-qq} \langle \ph_{-q}\otimes \ph_q, \ph_{s-p}\otimes \ph_t\rangle\langle (1-V_N\cR)V_N\rangle_{st,p0}=0.  
        \end{split}\]
On the other hand, using that $ \cR = \pih \cR$ so that  
        \[\begin{split}
             \langle \cR V_N\rangle_{(p-q)q, p0} & = \frac{\langle V_N- V_N\cR V_N\rangle_{(p-q)q, p0}}{|p-q|^2 + |q|^2}-\frac{\langle \pil (1-V_N\cR )V_N\rangle_{(p-q)q, p0}}{|p-q|^2 + |q|^2}  \\
             &= \frac{\langle V_N- V_N\cR V_N\rangle_{(p-q)q, p0}}{|p-q|^2 + |q|^2}
        \end{split}  \]
if $(p-q,q)\in (\PL^2)^c $ and $ \langle \cR V_N\rangle_{(p-q)q, p0} =0$ otherwise, we obtain that
        \[\begin{split}
            \big\| \pil^{+} e^{-ipx_1} \cR  V_N e^{ipx_1}\big\|_{\infty} &\leq   \sum_{s\in  \PL} | \langle \cR V_N\rangle_{(p-s)s, p0}|\leq CN^{\alpha+\kappa-1},\\ 
             \big\| \pil^{+} V_N^{1/2}\pil^{+} e^{-ipx_1} \cR  V_N e^{ipx_1}\big\|_{\infty} &\leq   \sum_{s,t\in \PL} \frac{C}{N^{3-3\kappa}(1+|t|^2)} \leq CN^{2\alpha+\kappa-1}.
        \end{split}\] 
Hence, arguing similarly as in the previous steps, we find that
        \[\begin{split}
            &|\langle V_N (1-\cR^{(p)}V_N)\pil^{+} e^{-ipx_1}\cR  V_N\rangle_{00,p0}|\\ &\leq \frac{CN^{\kappa}}{N}\big(   \big\| \pil^{+} e^{-ipx_1} \cR  V_N e^{ipx_1}\big\|_{\infty}+N^{-\alpha}\big\|\pil^{+} V_N^{1/2} \pil^{+} e^{-ipx_1} \cR  V_N e^{ipx_1}\big\|_{\infty}  \big) \leq C N^{\alpha+2\kappa-2}.
        \end{split}\] 
Notice here that we used additionally the operator inequalities $ -\Delta^{(p)}   \geq N^{2\alpha}\pih^+ $ and, as a consequence, $ \cR^{(p)} \leq N^{-2\alpha}\pih^+\leq N^{-2\alpha}$ in the image 
        $$ \pih^+ \big(\textbf{1}_{\text{P} = 0}\,L^2(\Lambda^2)\big) =\pih^{+}\, \overline{\text{span}\{ \ph_s\otimes \ph_{-s}:s\in\Lambda^*\}} =  \overline{\text{span}\{ \ph_s\otimes \ph_{-s}:|s| > N^{\alpha}\}} $$ 
of the space of zero total momentum $\text{P}:=-i\nabla_{x_1}-i\nabla_{x_2}$ under $\pih^+$, and that both 
        \[  V_N^{1/2}\pil^{+}V_N^{1/2} \pil^{+} e^{-ipx_1}\cR  V_N e^{ipx_1}\in L^2(\Lambda^2)\;\; \text{ and } \;\;V_N^{1/2}\pil^{+}V_N^{1/2}\in L^2(\Lambda^2) \]
have zero total momentum. 

Collecting the previous bounds, proving \eqref{eq:app3} reduces to proving that 
        \[|\langle V_N\cR^{(p)} V_N \rangle_{00,00}- \langle  V_N\cR V_N \rangle_{00,00}|\leq CN^{2\kappa-\alpha-2}  |p|^2.\]
To show this, we use that 
        \[ | \langle\cR^{(sp)}V_N \rangle_{-qq,00}| =  \frac{|\langle V_N-V_N\cR^{(sp)}V_N\rangle_{-qq,00}|}{|sp-q|^2+|q|^2}\leq \frac{CN^{\kappa-1}}{|sp-q|^2+|q|^2}  \]
for all $s\in [0,1]$ and $|q|>N^{\alpha}$ (otherwise $\langle\cR^{(sp)}V_N \rangle_{-qq,00}=0$). Together with 
        $$\langle V_N\cR^{(p)} V_N \rangle_{00,00} = \langle V_N\cR^{(-p)} V_N \rangle_{00,00} $$ 
and a second order Taylor expansion, we find that 
        \[|\langle V_N\cR^{(p)} V_N \rangle_{00,00}- \langle  V_N\cR V_N \rangle_{00,00}|\leq CN^{2\kappa-2}  |p|^2\int_{|q|> N^{\alpha}}dq\,  |q|^{-4}\leq N^{2\kappa-\alpha-2} |p|^2, \]
which implies \eqref{eq:app3} and thus \eqref{eq:bndvr}.

Finally, \eqref{eq:bndeta} is a direct consequence of the identity $(-\Delta_{x_1}-\Delta_{x_2}) \eta =  \pih  \vr \pil$ 
and the bound \eqref{eq:bndvr} implying that $( |p|^2+|q|^2) | \langle \eta \rangle_{pq,st}|  \leq C \delta_{pq,st} \textbf{1}_{( \PL^2)^c}((p,q))\textbf{1}_{ \PL^2}((s,t))$. \qedhere 
\end{proof}


\end{document}